\newtheorem{theorem}{Theorem}
\newtheorem{corollary}[theorem]{Corollary}
\newtheorem{lemma}[theorem]{Lemma}
\newtheorem{definition}[theorem]{Definition}
\newtheorem{proposition}[theorem]{Proposition}
\newtheorem{remark}[theorem]{Remark}
\newenvironment{proof}{\noindent\bf{Proof.}\rm}{\hfill$\blacksquare$\bigskip}
\definecolor{darkgreen}{rgb}{0.0, 0.6, 0.03}
\newcommand{\ase}[1]{{\ASedit{#1}}}
\newcommand{\OLD}[1]{}
\newcommand{\lemref}[1]{Lemma~\protect\ref{lem:#1}}
\newcommand{\thmref}[1]{Theorem~\protect\ref{thm:#1}}
\begin{document}

\title{A tight negative example for MMS fair allocations}

\author{Uriel Feige\thanks{Weizmann Institute, Israel. \texttt{uriel.feige@weizmann.ac.il}}, Ariel Sapir\thanks{ \texttt{arielsa@cs.bgu.ac.il}}, Laliv Tauber\thanks{Bar Ilan University, Israel. \texttt{lalivta@gmail.com}}}


\maketitle

\begin{abstract}
We consider the problem of allocating indivisible goods to agents with additive valuation functions. Kurokawa, Procaccia and Wang {[JACM, 2018]} present instances for which every allocation gives some agent less than her maximin share. We present such examples with larger gaps. For three agents and nine items, we design an instance in which at least one agent does not get more than a $\frac{39}{40}$ fraction of her maximin share.
{Moreover, we show that there is no negative example in which the difference between the number of items and the number of agents is smaller than six, and that the gap (of $\frac{1}{40}$) of our example is worst possible among all instances with nine items.}

For $n \ge 4$ agents, we show examples in which at least one agent does not get more than a $1 - \frac{1}{n^4}$ fraction of her maximin share. {In the instances designed by Kurokawa, Procaccia and Wang, the gap is exponentially small in $n$.}

{Our proof techniques extend to allocation of chores (items of negative value), though the quantitative bounds for chores are different from those for goods. For three agents and nine chores, we design an instance in which the MMS gap is $\frac{1}{43}$.}
\end{abstract}

\section{Introduction}

We consider allocation problems with $m$ items, $n$ agents, and nonnegative additive valuation functions. The {\em maximin share} (MMS) of an agent $i$ is the highest value $w_i$, such that if all agents have the same valuation function $v_i$ that $i$ has, there is an allocation in which every agent gets value at least $w_i$. An allocation is {\em maximin fair} if every agent gets a bundle that she values at least as much as her MMS. Hence if all agents have the same valuation function, a maximin fair allocation exists.

Perhaps surprisingly, if agents have different additive valuation functions, then a maximin fair allocation need not exist.
Kurokawa, Procaccia and Wang~\cite{KPW18} present negative examples showing that for every $n \ge 3$, there are instances for which in every allocation, at least one agent does not receive her MMS. The {\em gap} (namely, the fraction of MMS lost by some agent) is not stated explicitly in~\cite{KPW18}. However, one can derive explicit gaps from their examples by substituting values for certain parameters that are used in the examples. Doing so gives gaps that are exponentially small in $n$. Even for small $n$, the gaps shown by these examples are orders of magnitude smaller than the positive results that are known, where these positive results show that (for additive valuations) there always is an allocation that gives every agent at least a $\frac{3}{4} + \Omega(\frac{1}{n})$ of her MMS~\cite{KPW18,BK20,GHSSY18,GT20}.

We present negative examples with substantially larger gaps than those shown in~\cite{KPW18}. The motivation for designing such examples is that they are needed if one is to ever establish tight bounds on the fraction of the MMS that can be guaranteed to be given to agents. Though we are far from establishing tight bounds for general instances, our bounds are tight in special cases. {In particular, when there are at most nine items and the additive valuation functions are
integer valued, our results imply the following tight threshold phenomenon. If for every agent, {her valuation function is such that the sum of all item values} is at most~$119$, then a maximin fair allocation always exists. If the sum of item values is~$120$, then there are instances in which a maximin fair allocation does not exist, and then the gap is $\frac{1}{40}$. If the sum of item values is larger than~$120$, the gap cannot be larger then $\frac{1}{40}$.}


\subsection{Our results}

The term {\em negative example} will refer to an allocation instance with additive valuation functions in which there is no allocation that gives every agent her MMS. The term $Gap(n,m)$ refers to the largest possible value $\delta \ge 0$, such that there is an allocation instance with $m$ items and $n$ agents with additive valuations, such that in every allocation there is an agent that gets at most a $1 - \delta$ fraction of her MMS.

In our work, we find negative examples with the smallest possible number of items. The number of items turns out to be nine. Among allocation instances with nine items, we find the allocation instance with the largest gap. Theorem~\ref{thm:largeGap} is based on this allocation instance.

\begin{theorem}
\label{thm:largeGap}
There is an allocation instance with three agents and nine items for which in every allocation, at least one of the agents does not get more than a $\frac{39}{40}$ fraction of her MMS. In other words, $$Gap(n = 3 \; , \; m = 9) \ge \frac{1}{40}.$$
\end{theorem}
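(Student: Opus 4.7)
The plan is to exhibit a specific instance with three agents and nine items, each agent having total value $120$, and then verify both that every agent's MMS equals $40$ and that no allocation gives every agent value at least $40$. Since I will use integer-valued additive valuations, the bound of $\frac{1}{40}$ follows because ``more than $\frac{39}{40}$ of MMS'' requires an integer value at least $40$, while any integer value $\le 39$ realizes the claimed gap.

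The MMS computation is the easy direction. Since $v_i(\items)=120$, we automatically have $\MMSi\le 40$. For each agent $i$ I would display an explicit partition of the nine items into three bundles that she values at exactly $40$ each, giving $\MMSi\ge 40$ and hence $\MMSi=40$.

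The substantive step is proving that no assignment gives every agent value $\ge 40$. Fix a partition $(B_1,B_2,B_3)$ and define $G_i=\{j:v_i(B_j)\ge 40\}$. Since the three bundle values sum to $120$ under $v_i$, one has $|G_i|\ge 1$ with $|G_i|=3$ only if $i$ values every bundle at exactly $40$. A good allocation corresponds to a system of distinct representatives for $(G_1,G_2,G_3)$. I would design the valuations so that for every partition of the nine items the sets $G_1,G_2,G_3$ violate Hall's condition. Structurally this requires showing that no single partition is simultaneously an MMS partition for all three agents, and, further, that bundles judged ``acceptable'' (value $\ge 40$) by two of the agents tend to coincide, forcing the third agent onto a bundle she values at most $39$.

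The main obstacle is the construction itself. The theorem claims $\frac{1}{40}$ is the largest achievable gap for nine items, so the instance is at the boundary of feasibility and there is essentially no slack: every near-balanced partition under one valuation must be disrupted under another. I expect to locate the right valuations by a computer search over integer valuations on nine items summing to $120$, and then prove impossibility by a case analysis over the structurally distinct partitions, exploiting the symmetries of the chosen instance to collapse most cases to a few archetypes on which Hall's condition fails directly.
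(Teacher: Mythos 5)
Your proposal correctly identifies the shape of the argument --- an integer-valued instance with all item values summing to $120$ per agent, MMS exactly $40$ for each agent, and a proof that no allocation gives all three agents value at least $40$, whence some agent gets at most $39 = \frac{39}{40}\cdot 40$. That framing matches the paper. But there is a genuine gap: you never exhibit the instance. The theorem is an existential statement, and its entire content is the witness --- the three specific $3\times 3$ valuation matrices. Saying ``I expect to locate the right valuations by a computer search'' defers exactly the step that constitutes the proof. The paper supplies the witness explicitly: agent $R$ values the rows at $40$ each, agent $C$ the columns at $40$ each, and agent $U$ values the pair $\{e_2,e_4\}$, the diagonal $\{e_3,e_5,e_7\}$, and the quadruple $\{e_1,e_6,e_8,e_9\}$ at $40$ each, with carefully chosen integer entries (up to $26$). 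Without those numbers there is nothing to verify.

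A secondary point about your verification strategy: the Hall's-condition formulation of ``no allocation succeeds'' is correct, but as stated it would require enumerating all partitions of nine items into three bundles for which every agent finds some bundle acceptable, which is a large case analysis. The paper avoids this with a small trick: define $M(e) = \max_i v_i(e)$ and observe that $\sum_e M(e) = 122 = 3\cdot 40 + 2$, so in any allocation where every agent gets value $\ge 40$, every bundle has $M$-value in $[40,42]$. This pins the candidate partitions down to exactly three --- the rows, the columns, and $(p,d,q)$ --- each of which visibly fails because two agents want the same unique acceptable bundle. If you do carry out your plan, you will want some device of this kind to make the impossibility argument finite and humanly checkable; Hall's condition alone does not collapse the case space.
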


The minimality of the number items in Theorem~\ref{thm:largeGap} is implied by Theorem~\ref{thm:n+5}, together with Proposition~\ref{pro:MMSexists} that implies that $n\ge 3$ in every negative example.

\begin{theorem}
\label{thm:n+5}
For every $n$, every allocation instance with $n$ agents and $m \le n + 5$ items has an allocation in which every agent gets her MMS. In other words, $$Gap(n \ge 1 \; , \; m \le n + 5) = 0$$
\end{theorem}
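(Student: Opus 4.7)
The plan is to proceed by induction on $n$, with base cases $n \le 2$ (known: an MMS-fair allocation always exists, e.g.\ by cut-and-choose for $n=2$) and $m \le n$ (any injective assignment of items to agents suffices, since $MMS_i \le \min_{o\in\items} v_i(o)$ when $m=n$, and $MMS_i = 0$ for unmatched agents when $m<n$). So assume $n \ge 3$ and $n < m \le n+5$.

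The engine of the induction is a \emph{reduction lemma}: it suffices to exhibit an agent $i$ and a nonempty bundle $B \subseteq \items$ with $v_i(B) \ge MMS_i$ and $v_j(B) \le MMS_j$ for every $j \ne i$. Given such $(i,B)$, I would assign $B$ to $i$ and recurse on the reduced instance $I'$ with $n-1$ agents and $m - |B| \le n+4 = (n-1)+5$ items. The condition $v_j(B) \le MMS_j$ guarantees, via a standard bundle-merging argument on each remaining agent $j$'s MMS partition (delete the items of $B$ and merge the depleted bundle(s) with a surviving one), that $MMS_j(I') \ge MMS_j(I)$, so the inductive MMS-fair allocation for $I'$ extends to one for $I$.

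The hard part is producing such a pair $(i,B)$, and this is where the constant $5$ enters: Theorem~\ref{thm:largeGap} already shows that no such reduction can exist in some instances with $m = n+6$. The main structural lever is that $m \le n+5$ forces each agent's $n$-bundle MMS partition to have at most $5$ bundles of size $\ge 2$. For large $n$ (say $n \ge 6$) every agent has at least one singleton MMS-bundle $\{o\}$ with $v_i(o) \ge MMS_i$, and I would analyse the bipartite graph in which $i \sim o$ iff $v_i(o) \ge MMS_i$: if some item is liked by a unique agent, a singleton reduction applies; otherwise every item is liked by at least two agents, and double-counting against the bound on multi-item MMS-bundles should yield either a contradiction or a 2-item reducible $B$.

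For the small cases $n \in \{3,4,5\}$, where an agent may have no singleton MMS-bundle at all, the plan is a finite case analysis over the possible partition-size profiles (e.g.\ $(2,2,4),(2,3,3),(1,3,4),(1,2,5),(1,1,6)$ for $n=3, m=8$). In each profile I would show that either some singleton MMS-bundle is liked only by its owner (singleton reduction), or a bundle $B$ of two or three items taken from the interior of some agent's multi-item MMS-bundle satisfies $v_j(B) \le MMS_j$ for all $j \ne i$; the verification rests on the total-value constraints $v_j(\items) \ge n \cdot MMS_j$ together with the bookkeeping forced by the other agents' partition profiles. The tightest configuration is $n=3, m=8$, which sits immediately below the minimal counterexample of Theorem~\ref{thm:largeGap}, and where the full $m-n = 5$ slack must be used; this is the main obstacle of the proof.
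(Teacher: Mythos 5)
Your overall skeleton (induction on $n$, a reduction step that removes one agent together with a bundle she accepts, singleton reductions disposing of $n\ge 6$, and a finite case analysis for $n\in\{3,4,5\}$) matches the paper's. But the engine of your induction is broken in two ways. First, the claimed reduction lemma is false for $|B|\ge 3$: it is not true that $v_j(B)\le MMS_j(I)$ alone guarantees $MMS_j(I')\ge MMS_j(I)$ after deleting $B$ and one agent. Take $MMS_j=10$ with $MMS_j$-partition $\{e_1,f_1\},\{e_2,f_2\},\{e_3,f_3\},\{g\}$, where $v_j(e_i)=3$, $v_j(f_i)=7$, $v_j(g)=10$, and $B=\{e_1,e_2,e_3\}$ with $v_j(B)=9\le 10$; the surviving items $7,7,7,10$ cannot be split into three bundles each worth $10$, so $MMS_j$ drops to $7$. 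The merging argument only works when $B$ touches at most two bundles of $j$'s partition, which is why the paper removes only singletons (Proposition~\ref{pro:reduce1}, unconditionally) or pairs (Proposition~\ref{pro:reduce2}). Second, even for $|B|=2$ your requirement that \emph{every} $j\ne i$ satisfies $v_j(B)\le MMS_j$ is too strong to be achievable in general: in the tight configurations the paper must handle, some remaining agent values the chosen pair strictly above her MMS. The paper's Lemma~\ref{lem:reduce2} therefore allows, per remaining agent, the alternatives that $B$ is contained in a bundle of her MMS partition or is dominated by a bundle of her partition meeting $B$ in exactly one item; these ``domination'' escape hatches (together with the ordered-instance normalization) are precisely what make the $n=3$ and $n=5$ cases close (Lemmas~\ref{lem:3} and~\ref{lem:5}), and the $n=4$ case additionally needs a Hall's-theorem matching over the four bundles of one agent's partition (Lemma~\ref{lem:4}) rather than a single reduction. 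Your proposal contains neither ingredient, and the ``bookkeeping forced by the total-value constraints'' will not substitute for them.

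A smaller point: your treatment of $n\ge 6$ is needlessly complicated, and the complication comes from the same over-restrictive lemma. Removing one item and one agent \emph{never} decreases any remaining agent's MMS (merge the depleted bundle with any other), so as soon as some agent has a single item worth at least her MMS --- which pigeonhole forces for every agent once $m<2n$ --- the singleton reduction applies with no condition on how the other agents value that item. There is no need for the bipartite ``liked by a unique agent'' analysis or the double-counting fallback; that branch of your plan is solving a problem that does not exist, while the genuinely hard cases $n\in\{3,4,5\}$ are left as an unexecuted sketch.
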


A weaker version of Theorem~\ref{thm:n+5} (with $m \le n + 3$) was previously proved in~\cite{BL15}.

The maximilaty of the gap in Theorem~\ref{thm:largeGap} (when there are nine items) is implied by Theorem~\ref{thm:tightGap}.

\begin{theorem}
\label{thm:tightGap}
Every allocation instance with three agents and nine items has an allocation in which every agent gets at least a $\frac{39}{40}$ fraction of her MMS. In other words, $$Gap(n = 3 \; , \; m = 9) \le \frac{1}{40}$$
\end{theorem}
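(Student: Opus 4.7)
The plan is to prove the upper bound $Gap(3,9) \le \frac{1}{40}$ by a structural analysis of an allocation that maximizes the minimum normalized welfare. First I would normalize each agent's valuation so that $\MMSi = 1$; the goal then becomes to show that for every three-agent, nine-item instance there is an allocation giving each agent a bundle of value at least $\frac{39}{40}$.

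Let $\pi$ be a \lexmax{} allocation, i.e., one whose sorted vector of agent values (as fractions of MMS) is lexicographically maximum, and suppose for contradiction that some agent---say agent~$1$---receives $v_1(\pi_1) < \frac{39}{40}$. I would then leverage lex-max optimality to derive a rich family of swap inequalities: for every item $g \in \pi_j$ with $j \neq 1$, moving $g$ from $j$ to $1$, or performing a $1$-for-$2$ or $2$-for-$2$ swap between $1$ and $j$, cannot raise $v_1(\pi_1)$ above the current value of the partner bundle. These inequalities bound $v_1(g)$ for items outside $\pi_1$ and bound $v_j(g)$ for items inside $\pi_1$.

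Before the case analysis I would apply preliminary reductions. Using Proposition~\ref{pro:MMSexists} and Theorem~\ref{thm:n+5}, if any agent's MMS partition contains a bundle of size $k \le 2$ of value at least $1$, then that bundle can be handed out and the residual instance has $9 - k \le 8$ items for $2$ agents with MMS preserved, so a maximin fair completion exists and we are done. This lets me assume every agent's MMS partition consists of three triples, each of value at least $1$. Combined with the swap inequalities, this structure forces very tight constraints: essentially each of agent $1$'s three MMS triples must intersect both $\pi_2$ and $\pi_3$, and the deficit of each such triple relative to~$1$ must be recoverable from a specific item currently held by another agent. Similar constraints derived from agents~$2$ and~$3$ (and from no-Pareto-improving-cycle conditions) must be imposed simultaneously.

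The main obstacle is the combinatorial case analysis over how agent~$1$'s three MMS triples interleave with $\pi_1,\pi_2,\pi_3$, coupled with the analogous interleavings for agents~$2$ and $3$. In each pattern I expect either to exhibit an explicit beneficial swap---contradicting lex-max optimality---or to pin down the ratios of item values so tightly that the instance must match, up to scaling and relabeling, the extremal nine-item instance of Theorem~\ref{thm:largeGap}; in that extremal instance the strict inequality $v_1(\pi_1) < \frac{39}{40}$ fails, yielding the desired contradiction. Pushing through this case analysis to verify the exact constant $\frac{39}{40}$---rather than some slightly weaker fraction---is where the delicate arithmetic will live, and it is the step I expect to consume most of the work.
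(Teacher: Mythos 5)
Your proposal has two genuine gaps, one of which is fatal as stated. The preliminary reduction --- ``if an agent's MMS partition contains a bundle of size $k\le 2$, hand it out and recurse on $9-k\le 8$ items with MMS preserved'' --- is false for $k=2$: removing a two-item bundle and one agent does \emph{not} in general preserve the remaining agents' maximin shares (this is exactly why the paper's Lemma~\ref{lem:reduce2} needs its three conditions: small, directly dominated, or indirectly dominated). The paper's own extremal instance is a counterexample to your reduction: agent $U$'s MMS partition contains the pair $P=\{e_2,e_4\}$, yet handing $P$ to $U$ leaves only $78<2\cdot 40$ of value for each of $R$ and $C$ on the remaining seven items, so no MMS completion exists. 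Worse, the structure you reduce to --- all three agents having MMS partitions into three triples --- is precisely the case in which an MMS allocation \emph{always} exists (Proposition~\ref{pro:size3} of the paper), so every genuine near-negative instance lives in the configuration your reduction discards. The correct structural conclusion (Theorem~\ref{thm:structure}) is that exactly one agent has bundles of sizes $2,3,4$ and the other two have row/column-like partitions of triples; all of the difficulty is concentrated there.

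The second gap is that you never supply a mechanism for extracting the exact constant $\frac{39}{40}$. Your lex-max swap inequalities and interleaving case analysis yield a continuum of instances satisfying a system of linear constraints, and the claim that these constraints ``pin down the ratios so tightly that the instance must match the extremal one up to scaling'' is asserted, not argued. The paper could not close this step by hand either: after reducing to two combinatorial structures it partitions the remaining continuum into finitely many classes, each governed by a linear (and ultimately mixed-integer) program with carefully derived order and allocation constraints, and verifies by computer that the optimum over all classes is $b=40$. Without either that finite LP decomposition or some substitute argument that certifies the optimum of the constraint system, the step where, as you put it, ``the delicate arithmetic will live'' is the entire theorem, and it remains unproved.
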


The proof of Theorem~\ref{thm:tightGap} is based on analysis that reduces the infinite space of possible negative examples into a finite number of classes. For each class, the negative example with largest possible gap within the class can be determined by solving a linear program. For every class we solved the respective linear program using a standard LP solver, and verified that there is no negative example (with 9 items) for which the gap is larger than $\frac{1}{40}$.

Theorem~\ref{thm:largeGap} is concerned with three agents. We also provide negative examples for every number of agents $n \ge 4$. The gaps in these negative examples deteriorate at a rate that is polynomial in $\frac{1}{n}$.

\begin{theorem}
\label{thm:polynomialGap}
For every $n \ge 4$, there is an allocation instance with $n$ agents and at most $3n + 3$ items for which in every allocation, at least one of the agents does not get more than a $1 - \frac{1}{n^4}$ fraction of her MMS. In other words, $$Gap(n \ge 4 \; , \; m \le 3n+3) \ge \frac{1}{n^4}$$
\end{theorem}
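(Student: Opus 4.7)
My plan is to construct an explicit allocation instance with $n$ agents and at most $3n+3$ items, generalizing the three-agent, nine-item hard instance used in Theorem~\ref{thm:largeGap}. With $3n+3$ items and $n$ agents, the average MMS bundle size is $3 + 3/n$, so I would aim for partitions in which $n-3$ agents receive bundles of size $3$ and the remaining three agents receive bundles of size $4$ (effectively embedding the $n=3$, $m=9$ hard instance into a larger instance whose extra agents act as ``fillers''). The item values are parametrized by a small $\varepsilon$ of order $1/n^4$: they sit close to a common baseline, with perturbations of magnitude $\varepsilon$ that depend on both the item and the agent, chosen so that each agent has a unique MMS-optimal partition.

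The first step is to pin down each agent's MMS. I would exhibit a canonical partition $\pi_i$ of value $w_i$ for agent $i$ and verify optimality by a local-swap argument: any partition differing from $\pi_i$ can be reached from $\pi_i$ by a sequence of swaps, each of which changes the minimum bundle by at most $O(\varepsilon)$ and in the wrong direction, so no deviation beats $w_i$. After normalizing so that $w_i = 1$ for every agent, the theorem reduces to showing that every global allocation assigns some agent a bundle of value at most $1 - 1/n^4$.

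The heart of the proof is the combinatorial conflict: the partitions $\pi_1, \ldots, \pi_n$ cannot be simultaneously realized as a single allocation, because they disagree on the placement of certain contested items. I would proceed by summing bundle values across all agents and comparing the total with the sum of the MMS targets; any discrepancy from a given agent's $\pi_i$ costs at least $\varepsilon$ to that agent, and a counting argument over the contested items forces at least one such mismatch to be unavoidable regardless of how the $n-3$ filler agents are served. Tuning $\varepsilon = \Theta(1/n^4)$ then calibrates the unavoidable shortfall to the claimed gap.

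The main obstacle I anticipate is the final combinatorial step: with $n$ agents and $3n+3$ items the allocational freedom is substantially greater than in the three-agent base case, so the perturbations must be designed so that their first-order effects do not cancel (otherwise one obtains only an exponentially small gap, as in the Kurokawa--Procaccia--Wang construction) and so that no agent can simply be handed an off-partition bundle of value within $1 - 1/n^4$ of her MMS. Overcoming this is what limits the guaranteed gap to a polynomial $1/n^4$ rather than a constant, and the verification will likely require handling a bounded number of structural cases that depend on which filler agent absorbs each type of deviation.
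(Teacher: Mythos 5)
Your proposal is a plan rather than a proof, and the two steps it leaves open are precisely the ones that carry all the difficulty. First, the optimality argument for the canonical partitions is unsound as stated: showing that every single swap away from $\pi_i$ moves the minimum bundle in the wrong direction does not certify that $\pi_i$ is a global maximin partition, since a sequence of swaps can pass through worse partitions on the way to a better one. (In fact only a lower bound on each $MMS_i$ is needed, and the clean way to get it is to make the construction so symmetric that the canonical partition's bundles all have exactly equal value, whence $MMS_i$ equals that common value because the grand total is $n$ times it; your $\varepsilon$-perturbed design does not obviously have this property.) Second, and more seriously, the ``counting argument over the contested items'' that is supposed to show that no allocation can serve everyone is exactly the missing combinatorial core. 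You correctly identify the danger --- that the first-order effects of the perturbations cancel and the gap collapses to something exponentially small, as in the Kurokawa--Procaccia--Wang construction --- but you give no mechanism that prevents it. Merely summing bundle values over all agents cannot work on its own: the totals are consistent by construction, so the contradiction must come from a structural statement about which partitions into (nearly) equal-value bundles are possible at all.

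The paper's proof supplies exactly that mechanism, and it is quite different from embedding the nine-item instance of Theorem~\ref{thm:largeGap} with filler agents. It uses only two valuation functions among the core agents (a set of at least two ``row'' agents and at least two ``column'' agents), derived from a single integer base matrix $B$ supported on the perimeter and main diagonal of an $n\times n$ grid, with all row sums and column sums equal to $t_B=n(n-2)^2+1$. The crucial step is Lemma~\ref{lem:split}: because $t_B\equiv 1\pmod{n-2}$ and exactly $2n-2$ items have value $\equiv 1\pmod{n-2}$, every partition into bundles of value exactly $t_B$ must either split the bottom row, split the right column, or contain a bundle of one specific forbidden shape. The valuations $V_R$ and $V_C$ are then $nB$ with $\pm 1$ integer adjustments on those special items, so that each of the three cases costs some agent at least $1$ out of $t_V=n^2(n-2)^2+n\le n^4$, while integrality disposes of all partitions that are not into equal-value bundles (a bundle worth less than $t_B$ in $B$ is worth at most $t_B-1$, hence at most $t_V-1$ after scaling and perturbation). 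Filler agents for larger $n$ are handled by adding auxiliary items each worth a full MMS, not by enlarging the bundles. To salvage your approach you would need to design your instance so that an analogue of Lemma~\ref{lem:split} holds and prove it; without that, the proposal does not establish the theorem.
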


Our negative examples that prove Theorem~\ref{thm:polynomialGap} are inspired by, and contain ingredients from, the negative examples presented in~\cite{KPW18}. The new aspect in our constructions is the formulation of Lemma~\ref{lem:split}, the observation that this lemma suffices for the proofs to go through (a related but more demanding property was used in~\cite{KPW18}), and a design, based on modular arithmetic, that satisfies the Lemma.

{The techniques of this paper extend from allocation of goods to allocation of chores (items of negative value, or equivalently, positive dis-utility). We find that results for chores are qualitatively similar to those for goods, though quantitative values of the gaps are different from those values for goods. Likewise, the proof techniques for the case of chores are similar to those shown in this paper for goods, though some of the details change. To simplify the presentation in this paper, all sections of the paper refer only to allocation of goods, except for Section~\ref{sec:chores} that refers only to allocation of chores. Section~\ref{sec:chores} is kept short, and presents only the adaptation of Theorem~\ref{thm:largeGap} to the case of chores.

\begin{theorem}
\label{thm:chores}
There is an allocation instance with three agents and nine chores for which in every allocation, at least one of the agents does not get less than a $\frac{44}{43}$ fraction of her MMS (of dis-utility). In other words, the instance has an MMS gap of $\frac{1}{43}$.
\end{theorem}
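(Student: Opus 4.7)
The plan is to exhibit an explicit $3\times 9$ table of dis-utilities and verify two properties: (a) for every agent $i$ the MMS value satisfies $MMS_i=43$; and (b) in every allocation of the nine chores among the three agents, at least one agent receives a bundle of total dis-utility at least $44$. Together these yield the MMS gap of $\frac{44-43}{43}=\frac{1}{43}$ claimed in the theorem.

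To obtain a candidate instance I would start from the three-agent, nine-item goods construction that witnesses Theorem~\ref{thm:largeGap} and apply the natural dualization $v_i^{\text{chore}}(g):=C_i-v_i^{\text{good}}(g)$, choosing the per-agent constant $C_i$ so that the nine chore dis-utilities sum to $129=3\cdot 43$. Under this transformation a partition that minimizes the maximum bundle dis-utility corresponds bundle-for-bundle to one that maximizes the minimum bundle value of goods, so the adversarial structure of the goods example should largely transfer. A small amount of fine-tuning of individual entries will likely be necessary to shift the gap from the goods value $\frac{1}{40}$ to the chore value $\frac{1}{43}$, since the two optimization senses interact differently with integrality. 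Step (a) is then routine: for each agent I would exhibit an explicit partition into three bundles each of dis-utility at most $43$ (the dualized image of that agent's MMS-defining goods partition), and observe that since the total dis-utility is $129$, no partition can attain maximum bundle strictly below $43$.

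The principal obstacle is step (b). I would organize it as a case analysis on the cardinality vector $(|A_1|,|A_2|,|A_3|)$ of the allocation, which up to permutation has only a handful of shapes: $(3,3,3)$, $(4,3,2)$, $(4,4,1)$, $(5,3,1)$, $(5,2,2)$, and a few sparser ones. For unbalanced shapes containing a bundle of size at most one, the remaining chores are forced onto two agents whose dis-utility sum is $129$ minus the lone chore, and a direct pigeonhole on each agent's largest residual chore values yields a bundle of dis-utility at least $44$ for one of them. For the balanced shapes $(3,3,3)$ and $(4,3,2)$ I would rely on the specific tight inequalities inherited from the goods instance and exploit integrality of the $v_i$ to extract the extra unit of dis-utility beyond $43$.

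If any shape resists a clean hand argument, the fallback is to certify infeasibility of the corresponding small linear program (one per allocation shape, with constraints ``every agent receives dis-utility at most $43$''), exactly analogous to the LP-based verification used in the proof of Theorem~\ref{thm:tightGap}. I expect the entire case analysis to reduce to at most a dozen such LPs, each of very modest size.
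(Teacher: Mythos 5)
There is a genuine gap: the theorem is an existence statement, so its entire content is the explicit instance, and your proposal never produces one. Worse, the recipe you give for deriving it from the goods example of Theorem~\ref{thm:largeGap} does not work. With nine items and per-agent total good-value $122 = 3\cdot 40 + 2$, the requirement $9C_i - 122 = 129$ forces a non-integer constant $C_i$; and, more fundamentally, the map $v \mapsto C_i - v$ sends a bundle of size $k$ and good-value $v$ to dis-utility $kC_i - v$, so bundles of different sizes are shifted by different amounts. Agent $U$'s MMS partition in the goods instance has bundles of sizes $2$, $3$ and $4$, so under your dualization her three bundles acquire three \emph{different} dis-utilities and the claimed ``bundle-for-bundle'' correspondence between min-max chore partitions and max-min goods partitions fails. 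The adversarial structure therefore does not transfer, and ``a small amount of fine-tuning'' is doing all the work: without concrete numbers, neither step (a) nor step (b) can be checked. The paper instead constructs a fresh integer instance (e.g.\ agent $R$'s first row is $(6,15,22)$, not an affine image of $(1,16,23)$) in which every row sums to $43$ for $R$, every column to $43$ for $C$, and the pair/diagonal/quadruple each to $43$ for $U$.

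Your verification plan for step (b) is also structurally different from, and weaker than, the paper's. The paper adapts the argument of Section~\ref{sec:40}: it forms the entrywise \emph{minimum} dis-utility matrix $M$ (total $127 = 3\cdot 43 - 2$), notes that an allocation giving everyone at most $43$ would have to partition $M$ into three bundles each of $M$-value in $[41,43]$, and then a short combinatorial case analysis rules out all such partitions. Your proposed case split on cardinality vectors, with LP infeasibility checks as a fallback, could in principle be made to work once an instance is fixed, but as written it is a verification strategy for an object you have not exhibited. To repair the proof you must either state the explicit $3\times 9$ table (as the paper does) or give a construction whose correctness can be checked; the affine dualization cannot serve that role.
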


We have verified that with eight chores, there always is an allocation giving every agent no more dis-utility than her MMS, and (using a computer assisted proof) that for nine items, $\frac{1}{43}$ is the largest possible gap. However, we omit details of this verification from this manuscript.}

\subsection{Related work}

In this section we review related work that is most relevant to the current paper. In particular, we shall only review papers that concern the maximin share (there are numerous papers considering other fairness notions), and only in the context of nonnegative additive valuation functions (some of the works we cite consider also other classes of valuation functions).

The maximin share was introduced by Budish~\cite{Budish11}.
The fact that there are allocations instances with additive valuations in which no MMS allocation exists was shown in~\cite{KPW18}. That paper presents an instance with three agents and twelve items that has no MMS allocation. The gap in that instance as presented in that paper is around $10^{-6}$, though by optimizing parameters associated with the instance it is possible to reduce the gap to the order of $10^{-3}$. The paper also shows that for every $n \ge 4$ there are instances with $3n+4$ items and no MMS allocation. The gaps in these instances are exponentially small in $n$, and this is inherent in the construction given in that paper.

Work on proving the existence of allocations that give a large fraction of the MMS was initiated in~\cite{KPW18}. The largest fraction currently known is $\frac{3}{4} + \frac{1}{12n}$~\cite{GT20}.

For the case of three agents, it was shown in~\cite{GM19} that there is an allocation that gives every agent at least a $\frac{8}{9}$ fraction of her MMS. Our Theorem~\ref{thm:largeGap} shows that one cannot guarantee more than a $\frac{39}{40}$ fraction in this case. For the case of four agents, it was shown in~\cite{GHSSY18} that there is an allocation that gives every agent at least a $\frac{4}{5}$ fraction of her MMS.

In~\cite{BL15} it was shown that an MMS allocation always exists if $m \le n+3$. We improve the bound to $m \le n + 5$, and show that this is best possible when $n=3$.

In~\cite{AMNS17} it was shown that if all items have values in $\{0,1,2\}$ then an MMS allocation exists. Our negative example in Theorem~\ref{thm:largeGap} uses integer values as high as~26.

\subsection{Preliminaries}

An allocation instance has a set $M = \{e_1, \ldots, e_m\}$ of $m$ items and a set $\{1, \ldots n\}$ of $n$ agents. The term {\em bundle} will always denote a set of items. Every agent $i$ has a valuation function $v_i: 2^M \rightarrow R$ that assigns a value to every possible bundle of items. We assume throughout that valuation functions $v$ are normalized ($v(\emptyset) = 0$) and monotone ($v(S) \le v(T)$ for all $S \subset T \subseteq M$). An $n$-partition of $M$ is a partition of $M$ into $n$ disjoint bundles. $P_n(M)$ denotes the set of all $n$-partitions of $M$. An allocation $A = (A_1, \ldots, A_n)$ is an $n$-partition of $M$, with the interpretation that for every $1 \le i \le n$, agent $i$ receives bundle $A_i$. The utility that agent $i$ derives from this allocation is $v_i(A_i)$.

\begin{definition}
\label{def:MMS}
Consider an allocation instance with a set $M = \{e_1, \ldots, e_m\}$ of $m$ items and a set $\{1, \ldots n\}$ of $n$ agents. Then the maximin share of agent $i$, denoted by $MMS_i$, is the maximum over all $n$-partitions of $M$, of the minimum value under $v_i$ of a bundle in the $n$-partition.
$$MMS_i = \max_{(B_1, \ldots, B_n) \in P_n(M)} \min_j [v_i(B_j)]$$
An $n$-partition that maximizes the above expression will be referred to as an $MMS_i$-partition.
\end{definition}

An allocation that gives every agent at least her MMS is referred to as an MMS allocation.

A valuation function $v$ is additive if $v(S) = \sum_{e \in S} v(e)$.
Though Definition~\ref{def:MMS} applies to arbitrary valuation functions, in this paper we shall only consider additive valuation functions.

By convention, in all remaining parts of the paper, all valuation functions are additive, unless explicitly stated otherwise.

We now review some known propositions concerning the MMS (with additive valuations). For completeness, we also sketch the proofs of these propositions, though we emphasize that all propositions in this section were known and are not original contributions of the current paper.

\begin{proposition}
\label{pro:MMSexists}
Every allocation instance in which either all agents or all agents but one have the same valuation function has an  MMS allocation.
\end{proposition}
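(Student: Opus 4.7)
The plan is to split into two cases based on the hypothesis, handle the fully identical case directly from the definition of MMS, and then handle the ``all-but-one'' case by giving the odd-agent-out first pick from a partition that is an MMS partition for the common valuation.

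First, suppose all $n$ agents share a common valuation function $v$. Let $(B_1, \ldots, B_n)$ be an $MMS$-partition for $v$, so that $v(B_j) \ge MMS_i$ for every $j$ and every agent $i$ (since all $MMS_i$ are equal). Allocating $A_j = B_j$ to agent $j$ then yields an MMS allocation, and this case requires nothing more.

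For the main case, suppose agents $2, \ldots, n$ share a common valuation $v$, while agent $1$ has valuation $v_1$. Let $(B_1, \ldots, B_n)$ be an $MMS$-partition for $v$, so $v(B_j) \ge MMS_2 = \cdots = MMS_n$ for every $j$. Let agent $1$ pick her most valued bundle under $v_1$, say $B_{j^*}$, and distribute the remaining $n-1$ bundles arbitrarily among agents $2, \ldots, n$ (one bundle per agent). Each of agents $2, \ldots, n$ then receives a bundle of $v$-value at least $MMS_i$, so their MMS constraint is satisfied.

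It remains to show that agent $1$ also receives at least $MMS_1$. The key observation is the standard averaging fact that $MMS_i \le v_i(M)/n$ for every agent $i$ with additive $v_i$: in any $n$-partition, the average bundle value is $v_i(M)/n$, so the minimum bundle value is at most $v_i(M)/n$, and hence the maximum (over partitions) of this minimum is at most $v_i(M)/n$. Applying this to agent $1$, and noting that by picking her favorite of $n$ bundles she obtains at least the average $v_1(M)/n$, we get
\[
v_1(B_{j^*}) \;\ge\; \frac{1}{n}\sum_{j=1}^{n} v_1(B_j) \;=\; \frac{v_1(M)}{n} \;\ge\; MMS_1,
\]
which completes the proof. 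No part of this argument poses a real obstacle; the only substantive ingredient is the averaging bound $MMS_1 \le v_1(M)/n$, which is immediate for additive valuations.
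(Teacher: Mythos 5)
Your proof is correct and takes essentially the same route as the paper: fix an MMS partition for the common valuation, let the distinguished agent choose her favorite bundle, and conclude she gets at least $MMS_1$ via the averaging bound $MMS_1 \le v_1(M)/n$ that follows from additivity. The paper compresses this last step into the phrase ``additivity of $v_n$ implies $v_n(B_j) \ge MMS_n$,'' which you have simply spelled out.
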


\begin{proof}
Let $v = v_1 = \ldots = v_{n-1}$ be the valuation function shared by all agents but one, and let $v_n$ be the valuation function of agent~$n$ who may have a different valuation function. Let $B_1, \ldots, B_n$ be an MMS partition with respect to $v$. For every agent $i$ with $1 \le i \le n-1$, every one of these bundles has value at least $MMS_i$. Allocate to agent~$n$ the bundle $j$ that maximizes $v_n(B_j)$, and allocate the remaining bundles to the other agents. Additivity of $v_n$ implies that $v_n(B_j) \ge MMS_n$, and hence every agent gets at least her MMS.
\end{proof}

The following three propositions concern reduction steps that allow us to replace an allocation instance by a simpler one.

An allocation instance with additive valuations and $m$ items $\{e_1, \ldots, e_j\}$ is {\em ordered} if for every agent $i$ and every two items $e_j$ and $e_k$ with $j < k$ we have that $v_i(e_j) \ge v_i(e_k)$.
Given an unordered allocation instance  with additive valuations and $m$ items, its {\em ordered version} is obtained by replacing the valuation function $v_i$ of each agent $i$ by a new additive valuation function $v'_i$ in which item values are non-increasing. That is, let $\sigma$ denote a permutation over $m$ items with respect to which the values of items are non-increasing under $v_i$. Then for every $1 \le j \le m$ we have that $v'_i(e_j) = v_i(e_{\sigma^{-1}(j)})$. The following proposition is due to~\cite{BL15}.

\begin{proposition}
\label{pro:BL}
For every instance $I$ with additive valuations, every allocation $A'$ for its ordered
version $I'$ can be transformed to an allocation $A$ for $I$, while ensuring that every agent derives at least as high utility from $A$ in $I$ as derived from $A'$
in $I'$.
\end{proposition}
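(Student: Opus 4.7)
The plan is to give a greedy, item-by-item construction of $A$ from $A'$. For $j = 1, 2, \ldots, m$, let $i(j)$ denote the agent who receives $e_j$ under $A'$. I would process $e_1, e_2, \ldots, e_m$ in order, and at step $j$ hand to agent $i(j)$ some still-unallocated item $\pi(j) \in M$ that maximizes $v_{i(j)}$ among items that remain available. The resulting bundles $A_1, \ldots, A_n$ partition $M$ by construction and so form a valid allocation of $I$.

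The substantive step is to verify the per-step inequality $v_{i(j)}(\pi(j)) \geq v'_{i(j)}(e_j)$. Writing $i = i(j)$ and recalling that $v'_i(e_j) = v_i(e_{\sigma^{-1}(j)})$ is the $j$-th largest value of $v_i$ over items of $I$, the top $j$ items of $I$ under $v_i$ all have $v_i$-value at least $v'_i(e_j)$. At step $j$ only $j-1$ items have been committed in earlier rounds, so by pigeonhole at least one of these top $j$ items of $I$ is still available, and since $\pi(j)$ is chosen to be $v_i$-maximal among available items we obtain $v_i(\pi(j)) \geq v'_i(e_j)$. Summing this inequality over all $j$ with $i(j) = i$ and invoking additivity of $v_i$ gives $v_i(A_i) \geq v'_i(A'_i)$, which is exactly the conclusion of the proposition.

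The main thing to be careful about is the pigeonhole bookkeeping: the relevant set at step $j$ is the top $j$ items of $I$ under $v_i$ (one might instinctively reach for the top $m-j+1$ items, which would fail for large $j$). Once one fixes on the correct threshold, the counting reduces to the observation that only $j-1$ items have been touched before step $j$, so not all of the top $j$ can have been spent. Beyond this single point, the argument uses no further structure of $A'$ other than that it is an $n$-partition, and the procedure is manifestly finite.
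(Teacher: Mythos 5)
Your proposal is correct and is essentially the paper's own argument: the paper phrases it as a ``choosing sequence'' in which, for each round $r$, the agent who received the $r$th most valuable item under $A'$ picks her favorite remaining item of $I$, which is exactly your greedy construction. Your pigeonhole justification of the per-round inequality $v_{i(j)}(\pi(j)) \ge v'_{i(j)}(e_j)$ is a (welcome) elaboration of a step the paper asserts without detail.
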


\begin{proof}
A choosing sequence is a sequence of names of agents (repetitions are allowed).
The choosing sequence induces an allocation by the following procedure. Starting from round~1, in each round $r$, the agent whose name
appears in the $r$th location in the choosing sequence receives the item of highest value for the agent (ties can be broken arbitrarily), among the yet unallocated items. The allocation $A'$ for $I'$ induces a choosing sequence, where for every $r$, the agent in location $r$ is the one to which $A'$ allocated the $r$th most valuable item in $I'$.
Using this choosing sequence for the instance $I$, in every round $r$, the respective agent gets an item that she values at least as her $r$th most valuable item, which is the value of the item that she got under $A'$.
\end{proof}

Proposition~\ref{pro:BL} implies that when searching for a negative example with the maximum possible gap, it suffices to restrict attention to ordered instances.

The following two propositions are helpful for arguments that are based on induction on $n$. As each such proposition concerns two instances, in the MMS notation we shall specify which instance we refer to.

\begin{proposition}
\label{pro:reduce1}
Let $I$ be an arbitrary allocation instance with a set $M$ of items and $n$ agents. Let $I'$ be an allocation instance derived from $I$ by removing an arbitrary item $e$ from $M$, and removing one arbitrary agent. Then for each of the remaining agent $q$, $MMS_q(I') \ge MMS_q(I)$.
\end{proposition}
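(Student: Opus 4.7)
The plan is to produce, from an $MMS_q(I)$-witnessing $n$-partition of $M$, an $(n-1)$-partition of $M \setminus \{e\}$ in which every bundle still has $v_q$-value at least $MMS_q(I)$. Since $MMS_q(I')$ is defined as the maximum, over all $(n-1)$-partitions of $M \setminus \{e\}$, of the minimum bundle value under $v_q$, producing such a partition suffices.

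Concretely, I would start by fixing an $MMS_q$-partition $(B_1, \ldots, B_n)$ of $M$ with respect to $v_q$, so that $v_q(B_\ell) \ge MMS_q(I)$ for every $\ell$. Let $j$ be the index of the bundle that contains the removed item $e$, and pick any other index $k \ne j$ (which exists because $n \ge 2$ in any instance where removing an agent leaves at least one remaining agent). Form a new collection of $n-1$ bundles on the item set $M \setminus \{e\}$ by replacing $B_j$ and $B_k$ with the single merged bundle $(B_j \setminus \{e\}) \cup B_k$, and keeping the other $n-2$ bundles unchanged. This is an $(n-1)$-partition of $M \setminus \{e\}$.

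To finish, I would verify the value lower bound on each bundle in the new partition. The $n-2$ unchanged bundles $B_\ell$ with $\ell \ne j,k$ satisfy $v_q(B_\ell) \ge MMS_q(I)$ by assumption. For the merged bundle, additivity and non-negativity of $v_q$ give $v_q\bigl((B_j \setminus \{e\}) \cup B_k\bigr) \ge v_q(B_k) \ge MMS_q(I)$, because removing an item can only (weakly) decrease value, and $B_j \setminus \{e\}$ and $B_k$ are disjoint. Hence the minimum bundle value in this $(n-1)$-partition is at least $MMS_q(I)$, which gives $MMS_q(I') \ge MMS_q(I)$ by Definition~\ref{def:MMS}.

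There is no real obstacle here; the only subtlety is bookkeeping on the partition sizes (going from $n$ to $n-1$ bundles by a merge) and making sure to invoke both additivity and non-negativity of $v_q$ to conclude that discarding $e$ from the merged bundle does not drop it below $MMS_q(I)$. The argument is otherwise entirely direct, and it is worth noting that it does not rely on anything about the removed agent other than the fact that removal leaves at least one agent behind.
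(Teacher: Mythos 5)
Your proof is correct and follows essentially the same route as the paper: take an $MMS_q(I)$-partition, merge the bundle containing $e$ (with $e$ deleted) into another bundle, and observe that the merged bundle still contains a full original bundle and hence has value at least $MMS_q(I)$. The only difference is cosmetic (generic indices $j,k$ versus relabeling so that $e \in B_n$), so there is nothing further to add.
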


\begin{proof}
Let $(B_1, \ldots, B_n)$ be an $MMS_q(I)$ partition. By renaming bundles, we may assume without loss of generality that $e \in B_n$. Then $(B_1, \ldots, B_{n-2}, (B_{n-1} \cup B_n \setminus \{e\}) )$ is an $(n-1)$ partition for $M \setminus \{e\}$ that certifies that $MMS_q(I') \ge MMS_q(I)$.
\end{proof}

\begin{proposition}
\label{pro:reduce2}
Let $I$ be an arbitrary allocation instance with a set $M$ of $m \ge 2$ items, and $n$ agents. Let $I'$ be an allocation instance derived from $I$ by removing two items $e_i$ and $e_j$ from $M$, and removing one arbitrary agent. Then for every remaining agent $q$, if {either the $MMS_q(I)$ partition has a bundle that contains both $e_i$ and $e_j$, or $v_q(e_i) + v_q(e_j) \le MMS_q(I)$,} then $MMS_q(I') \ge MMS_q(I)$.
\end{proposition}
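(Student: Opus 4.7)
The plan is to mimic the merging strategy used in the proof of Proposition~\ref{pro:reduce1}, but with two cases depending on which of the two hypotheses holds. Given an $MMS_q(I)$ partition $(B_1, \ldots, B_n)$, I want to exhibit an $(n-1)$-partition of $M \setminus \{e_i, e_j\}$ in which every bundle has value at least $MMS_q(I)$ under $v_q$; this will certify $MMS_q(I') \ge MMS_q(I)$.

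In the first case, some bundle of the $MMS_q(I)$ partition contains both $e_i$ and $e_j$; after renaming assume this is $B_n$. Then I take as the new partition $(B_1, \ldots, B_{n-2}, B_{n-1} \cup (B_n \setminus \{e_i,e_j\}))$. Each of $B_1, \ldots, B_{n-2}$ already has $v_q$-value at least $MMS_q(I)$, and the merged bundle has value at least $v_q(B_{n-1}) \ge MMS_q(I)$ by nonnegativity of the valuations.

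In the second case, the items $e_i$ and $e_j$ lie in two distinct bundles; after renaming assume $e_i \in B_{n-1}$ and $e_j \in B_n$. I form the partition $(B_1, \ldots, B_{n-2}, (B_{n-1} \cup B_n) \setminus \{e_i,e_j\})$. Again each $B_k$ for $k \le n-2$ has value at least $MMS_q(I)$, and the merged bundle now has value
\[
v_q(B_{n-1}) + v_q(B_n) - v_q(e_i) - v_q(e_j) \ge 2\,MMS_q(I) - MMS_q(I) = MMS_q(I),
\]
using the hypothesis $v_q(e_i) + v_q(e_j) \le MMS_q(I)$ in this case.

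There is no real obstacle here; the only thing to verify is that the construction is still well-defined in the boundary case $n=2$, where the resulting ``partition'' is a single bundle containing all remaining items, and both bounds above still go through unchanged. Thus in either case $(B_1, \ldots, B_{n-2}, B^*)$ is a valid $(n-1)$-partition of $M \setminus \{e_i, e_j\}$ whose minimum $v_q$-value is at least $MMS_q(I)$, establishing the claim.
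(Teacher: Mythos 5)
Your proof is correct and follows essentially the same route as the paper's: merge the two affected bundles (or the bundle containing both items with an arbitrary other bundle) and bound the merged bundle's value using either nonnegativity or the hypothesis $v_q(e_i) + v_q(e_j) \le MMS_q(I)$. Your explicit check of the $n=2$ boundary case is a small extra care the paper leaves implicit.
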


\begin{proof}
Let $(B_1, \ldots, B_n)$ be an $MMS_q(I)$ partition for $I$. If both $e_i$ and $e_j$ belong to the same bundle, then the proof is as in that for Proposition~\ref{pro:reduce1}. If $e_i$ and $e_j$ are in different bundles, by renaming bundles, we may assume without loss of generality that $e_i \in B_{n-1}$ and $e_j \in B_n$. Then $(B_1, \ldots, B_{n-2}, (B_{n-1} \setminus \{e_i\}) \cup (B_n \setminus \{e_j\}) )$ is an $(n-1)$ partition for $M \setminus \{e_i, e_j\}$ that certifies that $MMS_q(I') \ge MMS_q(I)$. This is because $v_q\left((B_{n-1} \setminus \{e_i\}) \cup (B_n \setminus \{e_j\})\right) = v_q(B_{n-1}) + v_q(B_n) - (v_q(e_i) + v_q(e_j)) \ge 2MMS_q(I) - MMS_q(I) = MSS_q(I)$.
\end{proof}

\section{An MMS gap of $\frac{1}{40}$}
\label{sec:40}


In this section we prove Theorem~\ref{thm:largeGap}, showing an allocation instance for which in every allocation, at least one of the agents gets at most a $\frac{39}{40}$ fraction of her MMS.

\begin{proof}
To present the instance that proves Theorem~\ref{thm:largeGap}, we think of the nine items as arranged in a three by three matrix, with rows $r_1, r_2, r_3$ (starting from the top) and columns $c_1, c_2, c_3$ (starting from the left).

$\left(
  \begin{array}{ccc}
    e_1 & e_2 & e_3 \\
    e_4 & e_5 & e_6 \\
    e_7 & e_8 & e_9 \\
  \end{array}
\right)$

There are three agents, referred to as $R$ (the {\em row} agent), $C$ (the {\em column} agent), and $U$ (the {\em unbalanced} agent).
The MMS of every agent is~40. When depicting valuation functions, for each agent, we present the items in one of her MMS bundles in boldface.

Every row in the valuation function of $R$ has value~40 and gives $R$ her MMS. Her valuation function is:

$\left(
  \begin{array}{ccc}
    1 & 16 & 23 \\
    26 & 4 & 10 \\
    {\bf 12} & {\bf 19} & {\bf 9} \\
  \end{array}
\right)$

Every column in the valuation function of $C$ has value~40 and gives $C$ her MMS. Her valuation function is:

$\left(
  \begin{array}{ccc}
    1 & 16 & {\bf 22} \\
    26 & 4 & {\bf 9} \\
    13 & 20 & {\bf 9} \\
  \end{array}
\right)$

The bundles that give $U$ her MMS are $p = \{e_2, e_4\}$ (the {\em pair}, in boldface), $d = \{e_3, e_5, e_7\}$ (the {\em diagonal}), and $q = \{e_1, e_6, e_8, e_9\}$ (the {\em quadruple}). The valuation function of $U$ is:

$\left(
  \begin{array}{ccc}
    1 & {\bf 15} & 23 \\
    {\bf 25} & 4 & 10 \\
    13 & 20 & 9 \\
  \end{array}
\right)$

It remains to show that no allocation gives every agent her MMS. An allocation is a partition into three bundles. As a sanity check, let us first consider the three partitions that each give one of the agents her MMS. For the partition $(r_1,r_2,r_3)$, both $C$ and $U$ want only $r_3$, and hence one of them does not get her MMS. For the partition $(c_1,c_2,c_3)$, both $R$ and $U$ want only $c_3$, and hence one of them does not get her MMS. For the partition $(p,d,q)$, both $R$ and $C$ want only $p$, and hence one of them does not get her MMS.

To analyse all possible partitions in a systematic way, we consider a valuation function $M$ that values each item as the maximum value given to the item by the three agents. Hence $M$ is:

$\left(
  \begin{array}{ccc}
    1 & 16 & 23 \\
    26 & 4 & 10 \\
    13 & 20 & 9 \\
  \end{array}
\right)$

Every allocation that gives every agent her MMS partitions $M$ into three bundles, where the sum of values in each bundle is at least~40, but not more than~42 (as the sum of all values of $M$ is $40*3 + 2$). If one of the bundles has two items, then this bundle must be $\{e_2, e_4\} = p$, whose value under $M$ is~42. Hence each of the two remaining bundles must have value~40 under $M$.
The unique way of partitioning the remaining items into two bundles of value~40 is to have the bundles $\{e_3, e_5, e_7\} = d$ and $\{e_1, e_6, e_8, e_9\} = q$. {(The only way of reaching a value~40 in a bundle that contains item $e_3$ of value 23 is to include the two items of values~4 and~13.)} But we already saw (in the sanity check) that the partition $(p, d ,q)$ is not a valid solution.

It follows that the partition must be into three bundles, each of size three. The bundle containing $e_9$ must have value between~40 and~42. There are only two such bundles of size three, namely $r_3$ and $c_3$. Each of them has value~42. If one of them is chosen, the remaining two bundles in the partition must then each be of value~40. For $e_4$, the only two bundles of value~40 are $r_2$ and $c_1$. Hence we get only two possible partitions, $(r_1,r_2,r_3)$ and $(c_1,c_2,c_3)$, and both were already excluded in our sanity check.
\end{proof}

\section{MMS gaps that are inverse polynomial in the number of agents}

We present examples that apply for every $n \ge 4$. The initial design of our examples will include $5n - 7$ items, but for $n \ge 6$, this number will be reduced later.  It will be convenient to think of the items as being arranged as selected entries in an $n$ by $n$ matrix, along the perimeter of the matrix, and along its main diagonal. We will construct two valuation functions, where a set $R$ of at least two agents have valuation function $V_R$, and a set $C$ of of at least two agents have valuation function $V_C$ (here $R$ stands for {\em row} and $C$ stands for {\em column}, and $|R| + |C| = n$). We will start with a base matrix $B$,
and then modify $B$ so as to obtain $V_R$ and $V_C$.

In the base matrix $B$, the items have only seven different values, regardless of the value of $n$. We shall partition the items into groups of items of equal value, and give an informative name to each group.

Rows are numbered from top down, and columns from left to right. We use the convention that the index $j$ specifies an arbitrary value in the range $2\le j \le n-1$.

The value of items in each group, and the locations of the groups in $B$, are as follows.

\begin{itemize}

\item $B_{1j} = (n-2)n$. (Top row, excluding corners).

\item $B_{1n} = 1$. (Top-right corner.)

\item $B_{j1} = (n-2)(n-1)$. (Left column, excluding corners.)

\item $B_{jj} = (n-2)(n^2 - 4n + 2)$. (Main diagonal, excluding corners.)

\item $B_{jn} = (n-2)(n-1) + 1$. (Right column, excluding corners.)

\item $B_{n1} \cup B_{nj} = (n-2)^2 + 1$. (Bottom row, excluding bottom-right corner).

\item $B_{nn} = (n-2)(n-3)$. (Bottom-right corner.)

\end{itemize}

For $n=4$, this gives the following matrix.

$\left(
  \begin{array}{cccc}
    0 & 8 & 8 & 1 \\
    6 & 4 & 0 & 7 \\
    6 & 0 & 4 & 7 \\
    5 & 5 & 5 & 2 \\
  \end{array}
\right)$


Observe that all entries of $B$ are nonnegative. Moreover,  All row sums and all column sums have the same value $t_B = n(n-2)^2 + 1$ 

A bundle of items will be called {\em good} if the sum of its values is $t_B$. Hence all rows and all columns are good, but there are also other bundles that are good. A partition of all items into $n$ bundles is {\em good} if every bundle in the partition is good. For example, a partitioning of the items into row bundles is good, and likewise, a partitioning into column bundles is good. The following lemma constrains the structure of good partitions of $B$.

\begin{lemma}
\label{lem:split}
In every partitioning of the items of $B$ into $n$ {\em good} bundles, the structure of the good partition is such that at least one of the following three conditions hold:

\begin{enumerate}
    \item The bottom row is split among the $n$ good bundles (one item in each bundle).
    \item The right column is split among the $n$ good bundles (one item in each bundle).
    \item At least one of the bundles contains at least one item from the bottom row and at least one item from the right column, but does not contain the item $B_{nn}$.
\end{enumerate}
\end{lemma}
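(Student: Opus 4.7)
My plan combines a modular argument with a sum/size comparison. Reducing item values modulo $n-2$, every value is $\equiv 0 \pmod{n-2}$ except three, namely $B_{1n}=1$, the right-column interior entries $(n-2)(n-1)+1$, and the bottom-row entries $(n-2)^2+1$, all of which are $\equiv 1 \pmod{n-2}$. Call these \emph{small} items; there are $2(n-1)$ of them, namely the $n-1$ right-column items other than $B_{nn}$ and the $n-1$ bottom-row items other than $B_{nn}$. Since $t_B\equiv 1\pmod{n-2}$, every good bundle must contain a number of smalls that is $\equiv 1\pmod{n-2}$, hence at least $1$. A counting argument over the $n$ bundles then forces exactly one ``special'' bundle $S$ to contain $n-1$ smalls, while each other bundle contains exactly one small.

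Write $a$ for the number of small right-column items in $S$, so that the remaining $n-1-a$ smalls in $S$ are bottom-row items. I would then split on $a$. When $a=n-1$, a direct sum computation shows $S$'s smalls fall short of $t_B$ by exactly $(n-2)(n-3)$; every non-small item other than $B_{nn}$ has strictly larger value, while $B_{nn}$ has value exactly $(n-2)(n-3)$, so $B_{nn}$ must join $S$, making $S$ precisely the entire right column. The remaining bottom-row small items then go one to each other bundle (as their single small) while $B_{nn}$ sits in $S$, so the bottom row is split one per bundle and Condition 1 holds. The case $a=0$ is symmetric and yields Condition 2.

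The interesting case is $1\le a\le n-2$, where I claim $B_{nn}\notin S$; this suffices because $S$ then contains $a\ge 1$ right-column items and $n-1-a\ge 1$ bottom-row items, but not $B_{nn}$, directly witnessing Condition 3. To prove the claim by contradiction, assume $B_{nn}\in S$. A short algebraic check shows that if $B_{1n}\notin S$, the smalls in $S$ together with $B_{nn}$ already sum to $t_B+a(n-2)>t_B$, a contradiction. So $B_{1n}\in S$, and another calculation gives that $S$ still requires non-small items (other than $B_{nn}$) summing to $(n-2)(n-a-1)$, drawn from the top-row interior (value $(n-2)n$), left-column interior $(n-2)(n-1)$, and diagonal interior $(n-2)(n^2-4n+2)$. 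For $n\ge 5$ each of these values strictly exceeds the required $(n-2)(n-a-1)\le (n-2)^2$, so no non-empty subset can hit the target while the empty sum falls short, yielding the contradiction.

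The main obstacle is the base case $n=4$, where the diagonal interior value equals $4$ and exactly matches the residual when $a=1$, so the pure size argument does not rule it out. I would resolve this with a short direct check: in this sub-case $S$'s non-small contribution is forced to be a single diagonal item, and the leftover non-smalls then have values $\{8,8,6,6,4\}$ while the three non-special bundles require additional non-small sums of $10$, $10$, $12$. But the only subset of $\{8,8,6,6,4\}$ summing to $10$ is $\{6,4\}$, and since only one $4$ is available, two disjoint $10$-subsets cannot coexist, giving the final contradiction.
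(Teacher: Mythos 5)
Your proof is correct and follows essentially the same route as the paper: the mod~$(n-2)$ argument forcing one bundle with $n-1$ special items and the rest with one each, a case analysis on how that bundle mixes bottom-row and right-column items, the observation that for $n\ge 5$ the residual deficit is too small to be filled by any remaining item, and a finite check for $n=4$. The only differences are cosmetic — you parametrize by $a$ and make the overshoot in the $B_{1n}\notin S$ subcase explicit (which is slightly cleaner than the paper's phrasing), and your $n=4$ endgame pivots on the value-$7$ items instead of the value-$8$ items.
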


\begin{proof}
Observe that $t_B = n(n-2)^2 + 1 = 1$ modulo $n-2$. There are exactly $2n-2$ items that have value~1 modulo $n-2$ (the bottom row and the right column, excluding the bottom-right corner). We refer to these items as {\em special}. The remaining items have value~0 modulo $n-2$, and are not special.
In every good partition, it must be the case that one good bundle has $n-1$ special items, and each other good bundle has one special item.

Consider the good bundle with $n-1$ special items.

If the $n-1$ special items are all in the bottom row (or all in the right column), then item $B_{nn}$ must be the remaining item in the bundle (that is the only way to reach $t_B$), and then the right column (or bottom row) must be split.

If the $n-1$ odd items include at least one from the bottom row and at least one from the right column, then we may assume that $B_{nn}$ is also in the bundle (as otherwise condition~3 of the Lemma holds). This accounts for $n$ items in the bundle. The sum of values of these $n$ items cannot possibly be equal to $t_B$. This can be verified by a case analysis. If $B_{1n}$ is among these items, then the only way to reach $t_B$ with $n-2$ additional special items is to add all items of $B_{jn}$ (as special items in the bottom row have strictly smaller value than items in $B_{jn}$), but then the bundle has no special items from the bottom row. Alternatively, if $B_{1n}$ is not among these items, then the only way to reach $t_B$ with $n-1$ special items is to add all special items of the bottom row (as special items in $B_{jn}$ have strictly larger value than special items items in the bottom row), but then the bundle has no special items from the right column.

Consequently, the sum values of these $n$ items needs to be strictly smaller than $t_B$.
Their total value is minimized if they are $B_{1n} \cup B_{nj} \cup B_{nn}$, giving a value of $1 + (n-2)((n-2)^2 + 1) + (n-2)(n-3) = (n-1)(n-2)^2 + 1$. Hence a value of $(n-2)^2$ is missing in order to complete the sum of values to $t_B = n(n-2)^2 + 1$. For $n \ge 5$, none of the remaining items has such small value, and hence such a good bundle cannot be formed at all. The only case that remains to be considered is $n=4$, because for $n=4$ the value of diagonal items $B_{jj}$ happens to satisfy $(n-2)(n^2 - 4n + 2) = (n-2)^2$.

Recall the matrix for $n=4$ depicted above. The composition of values in a good bundle that has two special items from the bottom row, the special item $B_{14}$, the item $B_{44}$, and one diagonal item, is  $(5,5,1,2,4)$.  But then one of the two items of value~8 does not have a good bundle. (An item of value~8 needs an additional value of~9 to reach~17. However, of the items that remain, there is only one combination of items that gives value~9, namely, as $4+5$.)
\end{proof}

\begin{remark}
\label{rem:KPW18}
Our proof for Theorem~\ref{thm:polynomialGap} follows a pattern used in~\cite{KPW18}. In their construction, the base matrix $B$ was required to have the property that it has only two good partitions: the row partition and the column partition. In contrast, we allow $B$ to have many more good partitions (as specified in Lemma~\ref{lem:split}), and show that even with this extra flexibility, the proof pattern of~\cite{KPW18} still works. Given this extra flexibility in the properties of $B$, we design such  matrices (one for each value of $n$) with much smaller integer entries than the corresponding matrices designed in~\cite{KPW18}.
\end{remark}

Using the matrix $B$, we shall now create two matrices, one for $V_R$ and one for $V_C$.
First, every entry of $B$ is multiplied by~$n$. Then, for $V_R$, subtract $1$ from the value of each special item in the bottom row, and add $n - 1$ to the value of the bottom-right corner. For $n=4$, the matrix for $V_R$ is:

$\left(
  \begin{array}{cccc}
    0 & 32 & 32 & 4 \\
    24 & 16 & 0 & 28 \\
    24 & 0 & 16 & 28 \\
    19 & 19 & 19 & 11 \\
  \end{array}
\right)$

The maximin share of every agent in $R$ is~68 (each row is a bundle). For general $n \ge 4$, this maximin share is $t_V = nt_B = n^2(n-2)^2 + n$.

For $V_C$, subtract $1$ from the value of each special item in the right column, and add $n - 1$ to the value of the bottom-right corner. For $n=4$, the matrix for $V_C$ is:

$\left(
  \begin{array}{cccc}
    0 & 32 & 32 & 3 \\
    24 & 16 & 0 & 27 \\
    24 & 0 & 16 & 27 \\
    20 & 20 & 20 & 11 \\
  \end{array}
\right)$

Similar to agents in $R$, the maximin share of every agent in $C$ is~68 (each column is a bundle). For general $n \ge 4$, this maximin share is $t_V = nt_B  = n^2(n-2)^2 + n$.

\begin{proposition}
\label{pro:polynomialGap}
If $|R| + |C| = n$ and $|R|, |C| \ge 2$, then in every allocation, at least one player gets a bundle that he values as at most $t_V - 1$.
\end{proposition}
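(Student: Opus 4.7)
The plan is to suppose for contradiction that some allocation $A_1,\dots,A_n$ gives every agent a bundle of value at least $t_V$, and then derive a contradiction from Lemma~\ref{lem:split}. I first express the two valuations in terms of the base matrix: for any bundle $S$,
\[
V_R(S)=nB(S)-\alpha(S)+(n-1)\beta(S), \qquad V_C(S)=nB(S)-\gamma(S)+(n-1)\beta(S),
\]
where $\alpha(S)$ and $\gamma(S)$ count the special items of $S$ in the bottom row and the right column respectively, and $\beta(S)\in\{0,1\}$ indicates whether $B_{nn}\in S$. Since the total $B$-value of all items is $nt_B=t_V$, we have $\sum_i nB(A_i)=nt_V$.

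Next I would pin down the $B$-value of every bundle. Let $A^*$ be the unique bundle containing $B_{nn}$. For every $i\neq *$ we have $\beta(A_i)=0$, so the MMS condition forces $nB(A_i)\ge t_V+\alpha(A_i)\ge t_V$ when $i\in R$, and $nB(A_i)\ge t_V+\gamma(A_i)\ge t_V$ when $i\in C$. Combined with $\sum_i nB(A_i)=nt_V$, this forces $nB(A^*)\le t_V$. On the other hand, the MMS condition applied to agent $*$, together with $\beta(A^*)=1$, gives $nB(A^*)\ge t_V-(n-1)$. Since $nB(A^*)$ is a multiple of $n$ lying in the length-$(n-1)$ interval $[t_V-(n-1),\,t_V]$, it must equal $t_V$. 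Equality then propagates: $nB(A_i)=t_V$ for every $i$, so every $A_i$ is good in the sense of Lemma~\ref{lem:split}; moreover, the slack in the previous inequalities vanishes, giving $\alpha(A_i)=0$ for every $R$-agent $i\neq *$ and $\gamma(A_i)=0$ for every $C$-agent $i\neq *$.

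With the partition now known to be good, I apply Lemma~\ref{lem:split}. By symmetry assume $*\in R$; since $|R|\ge 2$ at least one other $R$-agent exists, and since $*\notin C$ all $|C|\ge 2$ $C$-agents differ from $*$. In case~(1) of the lemma, the bottom row is split across all $n$ bundles, so $A^*$ receives $B_{nn}$ while every other bundle receives a distinct special bottom-row item, contradicting $\alpha(A_i)=0$ for the other $R$-agents. In case~(2), the right column is split across all $n$ bundles, so every $C$-agent's bundle contains a special right-column item, contradicting $\gamma(A_i)=0$. In case~(3), some bundle $A_i\neq A^*$ contains a bottom-row item and a right-column item, both different from $B_{nn}$ and therefore both special, so $\alpha(A_i)\ge 1$ and $\gamma(A_i)\ge 1$---impossible whether $i\in R$ or $i\in C$.

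The main obstacle is the divisibility step: extracting from the MMS constraints that $nB(A^*)$, as a multiple of $n$, is trapped in the short interval $[t_V-(n-1),t_V]$ and hence must equal $t_V$. Once this has been established, equality propagates to every bundle, and the remaining case analysis on the conclusions of Lemma~\ref{lem:split} is mechanical; the hypothesis $|R|,|C|\ge 2$ enters precisely to guarantee that cases~(1) and~(2) each contain an offending bundle of the ``wrong'' type.
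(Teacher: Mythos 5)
Your proof is correct and follows essentially the same route as the paper's: reduce to the case where every bundle has $B$-value exactly $t_B$, then run the three-case analysis of Lemma~\ref{lem:split}, using $|R|,|C|\ge 2$ to find an offending bundle in cases~(1) and~(2). The only cosmetic difference is in the first step, where the paper directly observes that any bundle of $B$-value at most $t_B-1$ is worth at most $n(t_B-1)+(n-1)=t_V-1$ to whichever agent receives it, whereas you package the same arithmetic as a contradiction-plus-divisibility argument centered on the bundle containing $B_{nn}$.
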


\begin{proof}
The allocation partitions the items into $n$ bundles. If at least one of the bundles has value less than $t_B$ in $B$, then the same bundle has value at most $n(t_B - 1) + (n - 1) = t_V - 1$ for the agent who receives it. Hence we may assume that every bundle has value $t_B$ in $B$. By Lemma~\ref{lem:split}, there are only three possibilities for this.

\begin{enumerate}
    \item {\em The bottom row is split.} Then every agent in $R$ receives a bundle that contains a single item from the bottom row. As $|R| \ge 2$, for at least one row agent, this single item lost a value of~1 in the process of constructing $V_R$. Consequently, the value received by this agent is $nt_B -  1 = t_V - 1$.

    \item {\em The right column is split.} Then every agent in $C$ receives a bundle that contains a single item from the right column. As $|C| \ge 2$, for at least one column agent, this single item lost a value of~1 in the process of constructing $V_C$. Consequently, the value received by this agent is $nt_B - 1 = t_V - 1$.

    \item {\em At least one of the bundles contains at least one item from the bottom row and at least one item from the right column, but does not contain the item $B_{nn}$.} Such a bundle has value at most $t_V-1$ for every agent.
\end{enumerate}

\end{proof}

We can now prove Theorem~\ref{thm:polynomialGap}. In fact, we state a somewhat stronger version of it in which the gap is improved from $\frac{1}{n^4}$ to a somewhat larger value.

\begin{theorem}
\label{thm:polynomialGap1}
For given $N$, let $n = \lceil \frac{N + 4}{2} \rceil$. Then for every $N \ge 4$, there is an allocation instance with $N$ agents and at most $N + 4n - 7$ items (which gives $3N+1$ when $N$ is even and $3N + 3$ when $N$ is odd) for which in every allocation, at least one of the agents does not get more than a $1 - \frac{1}{f(n)}$ fraction of her MMS. Here, the function $f(n)$ has value $f(n) = n^2(n-2)^2 +n$. In other words, $$Gap(N \ge 5 \; , \; m \le 3N+3) \ge \frac{1}{f(n)}$$
\end{theorem}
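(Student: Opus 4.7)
My plan is to take the Proposition~\ref{pro:polynomialGap} instance at matrix dimension $n = \lceil(N+4)/2\rceil$ and add $N - n$ ``wildcard'' items, each valued at $t_V := f(n) = n^2(n-2)^2 + n$ by every agent. I would partition the $N$ agents into $|R|$ row agents (using $V_R$) and $|C|$ column agents (using $V_C$) with $|R|+|C| = N$ and $|R|, |C| \ge N - n + 2$; such a split exists precisely because $n \ge (N+4)/2$ by the definition of $n$. The total number of items is $5n - 7 + (N-n) = 4n + N - 7$, which matches $3N+1$ for even $N$ and $3N+3$ for odd $N$.

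I would next verify that every agent's MMS equals $t_V$: her total valuation is $n t_V + (N-n) t_V = N t_V$, bounding MMS above by $t_V$, while partitioning into the $n$ rows (or columns) together with $N - n$ singleton wildcards realizes $t_V$ on every bundle. To show no MMS allocation exists, I would fix an allocation, decompose each bundle as $A_i = M_i \cup E_i$ into matrix items and wildcards, and let $H$ be the set of row/column agents with $E_i = \emptyset$. Since $N - n$ wildcards occupy at most $N - n$ bundles, $|H| \ge n$. Every $i \in H$ must have $v_i(M_i) \ge t_V$ from matrix items alone; because $v_i$ differs from $n B$ by an adjustment of absolute value at most $n - 1$ and $B$ is integer-valued, this forces $B(M_i) \ge t_B$. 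Summing yields $n t_B \ge \sum_{i \in H} B(M_i) \ge |H| t_B$, so $|H| = n$, each $H$-bundle is good in the sense of Lemma~\ref{lem:split}, and the other bundles contain no matrix items.

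The main obstacle, and the reason for taking $n$ as large as $\lceil(N+4)/2\rceil$, is that each of the three cases in Lemma~\ref{lem:split} must force a loss of value for some row or column agent in $H$. In Case~1 (bottom row split), the $n - 1$ good bundles that contain a bottom-row special but not $B_{nn}$ have $V_R$-value $t_V - 1$ yet potentially $V_C$-value $t_V$, so the loss arises only when some row agent receives such a bundle; guaranteeing this requires $|H \cap R| \ge 2$. Since an adversary might route every wildcard to row agents, I only have $|H \cap R| \ge |R| - (N - n)$, and this is why I demand $|R| \ge N - n + 2$; the symmetric bound $|C| \ge N - n + 2$ handles Case~2 (right column split). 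Case~3 gives a bundle containing both a bottom-row and a right-column special but not $B_{nn}$, with value at most $t_V - 1$ under both $V_R$ and $V_C$, so whichever $H$-agent receives it loses. Assembling these cases yields the claimed gap of $1/t_V = 1/f(n)$.
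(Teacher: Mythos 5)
Your proposal is correct and follows essentially the same route as the paper: the same base instance at matrix dimension $n=\lceil (N+4)/2\rceil$ augmented by $N-n$ auxiliary items of value $t_V$, the same agent split guaranteeing at least two row and two column agents among those who receive no auxiliary item, and the same invocation of Lemma~\ref{lem:split} (via the case analysis of Proposition~\ref{pro:polynomialGap}) to extract a loss of $1$ out of $t_V=f(n)$. Your $H$-set counting argument (forcing $|H|=n$, each $B(M_i)=t_B$, and the remaining bundles to be free of matrix items) is a welcome, more explicit justification of the reduction step that the paper states only briefly.
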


\begin{proof}
For $4 \le N \le 5$ we have that the corresponding value of $n = \lceil \frac{N + 4}{2} \rceil = N$, and hence the corresponding instances was described above. (Observe that $f(n)$ equals the corresponding value of $t_V$ in these instances.)

For $N \ge 6$, we have that $n = \lceil \frac{N + 4}{2} \rceil < N$. In this case we construct an instance as above  for the corresponding value of $n$ (with value $t_V = n^2(n-2)^2 +n$). We add to this instance $N-n$ agents so that the number of agents becomes $N$. Among the agents, we set $\lfloor \frac{N}{2} \rfloor$ agents to be row agents, and the remaining agents to be column agents. We also add to the instance $N-n$ auxiliary items, each of value $t_V$, and so the total number of items is $(5n - 7) + (N-n) = N + 4n - 7$.

For each of the $N$ agents, the MMS is $t_v$ (by partitioning the set of items into the $N-n$ auxiliary items, and either the $n$ rows or the $n$ columns). $N-n$ agents get their MMS by getting an auxiliary item. However, among the $n$ agents that remain, at least two are row agents (because $|R| - (N - n) = \lfloor \frac{N}{2} \rfloor - N + \lceil \frac{N + 4}{2} \rceil = 2$) and at least two are column agents, and this suffices for Proposition~\ref{pro:polynomialGap} to apply.
\end{proof}


\section{An MMS allocation whenever $m \le n + 5$}

In this section we prove Theorem~\ref{thm:n+5}, that if $m \le n+5$ there always is an MMS allocation. The proof makes use of the following two lemmas.

\begin{lemma}
\label{lem:reduce1}
Let $I$ be an allocation instance with $n$ agents and $m$ items, and assume that for every instance with $n-1$ agents and $m-1$ items there is an MMS allocation. If there is an agent $i$ and item $e$ for which $v_i(e) \ge MMS_i(I)$, then $I$ has an MMS allocation.
\end{lemma}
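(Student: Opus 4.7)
The plan is to use the obvious greedy reduction: award the ``big'' item $e$ to agent $i$, who is then satisfied, and invoke the inductive hypothesis on the remaining instance. The main technical point is to make sure that removing agent $i$ together with item $e$ does not lower the MMS of any other agent; this is exactly what Proposition~\ref{pro:reduce1} supplies.

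More concretely, I would proceed as follows. First, let $I'$ be the instance obtained from $I$ by removing agent $i$ and item $e$. By Proposition~\ref{pro:reduce1}, for every remaining agent $q$ we have $MMS_q(I') \ge MMS_q(I)$. Since $I'$ has $n-1$ agents and $m-1$ items, the hypothesis of the lemma guarantees that $I'$ admits an MMS allocation $A'$, in which every remaining agent $q$ receives a bundle $A'_q$ with $v_q(A'_q) \ge MMS_q(I') \ge MMS_q(I)$.

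Next, define an allocation $A$ of $I$ by setting $A_i = \{e\}$ and $A_q = A'_q$ for every other agent $q$. Agent $i$ is satisfied because $v_i(A_i) = v_i(e) \ge MMS_i(I)$ by hypothesis, and every other agent $q$ is satisfied by the estimate above. Hence $A$ is an MMS allocation for $I$.

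There is essentially no obstacle here: the work has been done by Proposition~\ref{pro:reduce1}, and the lemma is really just the observation that this reduction composes with the inductive hypothesis. The only thing to be slightly careful about is that $e$ has genuinely been removed from the pool of items that the inductive step allocates, so there is no double-counting; this is immediate from the construction of $I'$.
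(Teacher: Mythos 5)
Your proof is correct and follows exactly the paper's own argument: remove $e$ and agent $i$, apply Proposition~\ref{pro:reduce1} to see that no remaining agent's MMS decreases, invoke the hypothesis to get an MMS allocation of the reduced instance, and extend it by giving $e$ to agent $i$. Nothing to add.
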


\begin{proof}
Remove item $e$ and agent $i$, resulting in an instance $I'$ with $n-1$ agents and $m-1$ items. By Proposition~\ref{pro:reduce1}, for every agent $j \not= i$ it holds that $MSS_j(I') \ge MSS_j(I)$. By the assumption of the lemma, there is an MSS allocation $A'$ for $I'$. Extend $A'$ to an allocation $A$ for $I$, by giving item $e$ to agent $i$. Allocation $A$ is an MSS allocation for $I$.
\end{proof}

\begin{lemma}
\label{lem:reduce2}
Let $I$ be an allocation instance with $n$ agents and $m$ items, and assume that for every instance with $n-1$ agents and $m-2$ items there is an MMS allocation. Suppose that there is an agent $q$ and a bundle $B$ containing two items such that $v_q(B) \ge MMS_q(I)$, and moreover, for every agent $j \not= q$, at least one of the following conditions hold:
\begin{enumerate}
    \item $B$ is {\em small}: $v_j(B) \le MSS_j(I)$.
    \item $B$ is {\em directly dominated}: $B$ is equal to or contained in one of the bundles of the $MMS_j$ partition.
    \item $B$ is {\em indirectly dominated}: the $MMS_j$ partition contains a bundle $B'$ such that $v_j(B') \ge v_j(B)$ and $|B' \cap B| = 1$.
\end{enumerate}
Then $I$ has an MMS allocation.
\end{lemma}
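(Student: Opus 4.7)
The plan is to reduce $I$ to the instance $I'$ obtained by removing the two items of $B$ together with agent~$q$, apply the inductive assumption to get an MMS allocation of $I'$, and then hand $B$ to $q$. For this to yield an MMS allocation of $I$, two things must be verified: that $v_q(B) \ge MMS_q(I)$ (which is given) and that for every other agent $j$ we have $MMS_j(I') \ge MMS_j(I)$. The work therefore lies in showing the latter inequality in each of the three cases listed in the lemma. In each case I will produce an $(n-1)$-partition of $M\setminus B$ in which every bundle is worth at least $MMS_j(I)$ under $v_j$; by Definition~\ref{def:MMS} this certifies $MMS_j(I') \ge MMS_j(I)$.

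For the \emph{small} and \emph{directly dominated} cases the verification is immediate from Proposition~\ref{pro:reduce2}: in the small case the hypothesis $v_j(B) \le MMS_j(I)$ is exactly its second precondition, and in the directly dominated case the $MMS_j$-partition already has one bundle that contains both items of $B$, which is its first precondition. So the main obstacle, and the only place a new construction is needed, is the \emph{indirectly dominated} case, where the two items of $B$ are split across two distinct bundles of the $MMS_j$-partition.

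In that case, write $B=\{e,e'\}$ and let $B'$ be the bundle of the $MMS_j$-partition guaranteed by condition~3, containing (say) $e$ but not $e'$; let $B''$ be the bundle containing $e'$. I will form a new $(n-1)$-partition of $M\setminus B$ by merging $B'$ and $B''$ after deleting $e$ and $e'$, and leaving the remaining $n-2$ bundles of the $MMS_j$-partition untouched. The untouched bundles are worth at least $MMS_j(I)$ by definition. The merged bundle has value
\[
v_j(B') + v_j(B'') - v_j(B) \;\ge\; v_j(B) + MMS_j(I) - v_j(B) \;=\; MMS_j(I),
\]
using $v_j(B')\ge v_j(B)$ from condition~3 and $v_j(B'')\ge MMS_j(I)$ because $B''$ belongs to the $MMS_j$-partition. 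Thus every bundle of the new $(n-1)$-partition of $M\setminus B$ has value at least $MMS_j(I)$, giving $MMS_j(I')\ge MMS_j(I)$ as required.

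Once $MMS_j(I')\ge MMS_j(I)$ is established for all $j\ne q$, the hypothesis of the lemma yields an MMS allocation $A'$ of $I'$: each such agent $j$ receives a bundle of $v_j$-value at least $MMS_j(I')\ge MMS_j(I)$. Extending $A'$ by assigning $B$ to~$q$ uses up exactly the removed items and the removed agent, and gives $q$ value $v_q(B)\ge MMS_q(I)$. Hence the extended allocation is an MMS allocation of $I$, completing the proof.
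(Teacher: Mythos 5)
Your proof is correct and follows essentially the same route as the paper: reduce via Proposition~\ref{pro:reduce2} for the small and directly dominated cases, and for the indirectly dominated case merge the two bundles of the $MMS_j$ partition that meet $B$ after deleting $B$'s items, using $v_j(B')\ge v_j(B)$ to bound the merged bundle's value. The paper phrases this last step as first constructing a new $MMS_j$ partition containing $B$ as a bundle and then invoking the directly dominated case, but the resulting $(n-1)$-partition of $M\setminus B$ is identical to yours.
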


\begin{proof}
Remove bundle $B$ and agent $q$, resulting in an instance $I'$ with $n-1$ agents and $m-2$ items. We claim that $MSS_j(I') \ge MSS_j(I)$ for every agent $j \not= q$. For agents for which either condition~1 or condition~2 hold, this follows by Proposition~\ref{pro:reduce2}.

For an agent $j$ for which only condition~3 holds, let $B_1'$ denote the other bundle intersected by $B$. Replace the two bundles $B'$ and $B_1'$ in the $MMS_j$ partition by the two bundles $B$ and $B_1 = (B' \cup B'_1) \setminus B$. We have that $v_j(B)\ge MMS_j(I)$ (as condition~1 is assumed not to hold) and $v_j(B_1) \ge MMS_j(I)$. {(The last inequality can be verified as follows. Condition~3 holding implies that $v_j(B') \ge v_j(B)$. This together with $(B \cup B_1) = (B' \cup B'_1)$ implies that $v_j(B_1) \ge v_j(B'_1)$. The fact that $B'_1$ is a bundle in the original $MSS_j$ partition implies that $v_j(B'_1) \ge MMS_j$.)}  Hence we get an $MMS_j$ partition in which $B$ is one of the bundles, and now we can apply condition~2 to conclude that $MSS_j(I') \ge MSS_j(I)$.

By the assumption of the lemma, there is an MSS allocation $A'$ for $I'$. Extend $A'$ to an allocation $A$ for $I$, by giving bundle $B$ to agent $q$. Allocation $A$ is an MSS allocation for $I$.
\end{proof}

We now prove Theorem~\ref{thm:n+5}.

\begin{proof}
The proof is by induction on $n$. The theorem trivially holds for $n = 1$, and holds for $n = 2$ by Proposition~\ref{pro:MMSexists}. The case $n=2$ serves as the base case of the induction, and it remains to prove the theorem for $n\ge 3$. In all cases with $n\ge 3$ we assume without loss of generality:

\begin{itemize}
    \item The theorem has already been proved for all $n' < n$ (the inductive hypothesis).
    \item $m = n+5$ (because if $m<n+5$, we may add $n + 5 - m$ auxiliary items that have~0 value to all agents).
    \item All bundles in the MMS partition of every agent are of size at least~2.
\end{itemize}

The third assumption can be made without loss of generality, as otherwise there is an agent $i$ and item $e$ for which $v_i(e) \ge MMS_i$, and then Lemma~\ref{lem:reduce1} allows us to reduce the instance to one in which the induction hypothesis already holds.

{Observe that the third assumption implies (among other things) that it suffices to consider only $n \le 5$, because for $n \ge 6$ we have that $m = n + 5 < 2n$, and the third assumption cannot hold.}

{Using these assumptions, the cases $n = 3$, $n=4$, and $n=5$ are proved in Lemma~\ref{lem:3}, Lemma~\ref{lem:4}, and Lemma~\ref{lem:5}, respectively.}
\end{proof}

\subsection{Three agents, eight items}

\begin{lemma}
\label{lem:3}
Every allocation instance with $n=3$ agents and $m = n + 5$ items has an MMS allocation.
\end{lemma}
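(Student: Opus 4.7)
By Proposition~\ref{pro:BL} I may assume the instance is ordered, so $v_i(e_1) \ge \cdots \ge v_i(e_8)$ for every agent $i$. Combined with the standing assumption that every bundle in each agent's MMS partition has size at least $2$, the bundle-size multiset of each agent's MMS partition is forced to be either $(2,2,4)$ or $(2,3,3)$, since these are the only compositions of $8$ into three parts each of size $\ge 2$.

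My strategy is to apply Lemma~\ref{lem:reduce2}: I look for an agent $q$ and a two-item bundle $B$ with $v_q(B) \ge MMS_q$ such that, for each of the other two agents $j$, the bundle $B$ is small, directly dominated, or indirectly dominated relative to $j$'s MMS partition. Since every instance with $n-1=2$ agents admits an MMS allocation by Proposition~\ref{pro:MMSexists} (with either of the two agents playing the role of the ``possibly different'' one), the hypothesis of Lemma~\ref{lem:reduce2} with $n-1=2$ and $m-2=6$ is trivially satisfied, and any such pair $(q,B)$ delivers an MMS allocation for the original instance. The natural candidates for $B$ are the size-$2$ bundles inside an agent's own MMS partition: a $(2,2,4)$-type partition contributes two such candidates, a $(2,3,3)$-type partition contributes one. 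The argument is then a case analysis over the partition types of the three agents.

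When some agent $q$ has a $(2,2,4)$ partition, I would try both of her candidate 2-bundles $B_1$ and $B_2$. A simultaneous failure of both $B_1$ and $B_2$ for some other agent $j$ pins down, via the three conditions of Lemma~\ref{lem:reduce2}, how $j$'s MMS partition distributes the four items of $B_1 \cup B_2$ across its three bundles, and it forces $v_j(B_1),v_j(B_2) > MMS_j$. Combining these inequalities with the ordered structure and with the identity $\sum_{k=1}^{8}v_j(e_k) \ge 3MMS_j$ should yield a contradiction, so at least one of $B_1,B_2$ works; if the two other agents individually rule out different $B_i$, a short swap argument between $q$'s 2-bundles and her 4-bundle should produce a single working candidate.

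The remaining case, in which every agent has a $(2,3,3)$ partition, is the one I expect to be the main obstacle. Each agent $q$ now supplies a unique candidate pair $B^{(q)}$, and failure of Lemma~\ref{lem:reduce2} for all three choices imposes strong mutual incompatibility: for every $q \ne j$ we would have $v_j(B^{(q)}) > MMS_j$, no bundle of $j$'s MMS partition contains both items of $B^{(q)}$, and each bundle of $j$'s partition meeting $B^{(q)}$ in exactly one item has $v_j$-value strictly less than $v_j(B^{(q)})$. My plan is to sum these strict inequalities across agents and compare them to the identity $\sum_{k=1}^{8} v_j(e_k) \ge 3MMS_j$ for each $j$; I expect this accounting, together with the ordered structure, to be inconsistent. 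Should the purely quantitative argument not close the case, I would fall back on explicit swaps: using the fact that $B^{(q)}$ meets two distinct bundles of $B^{(j)}$'s partition, I would reorganize one agent's MMS partition to produce a new size-$2$ bundle that is directly dominated by both of the other two, and then apply Lemma~\ref{lem:reduce2} to it.
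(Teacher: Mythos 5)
Your overall tool is the right one---apply Lemma~\ref{lem:reduce2} to a two-item bundle taken from some agent's own MMS partition---but the proof is not actually closed. The two cases you flag as needing work (simultaneous failure of both candidate $2$-bundles of a $(2,2,4)$ agent, and the all-$(2,3,3)$ case) are left as plans: ``should yield a contradiction,'' ``I expect this accounting \dots to be inconsistent,'' ``should the purely quantitative argument not close the case, I would fall back on explicit swaps.'' None of these is carried out, and it is precisely there that the content of the lemma lies. As written, the argument establishes the easy reductions but not the lemma.

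The idea you are missing is that the case analysis should not be organized around one agent's partition type at all. Take \emph{one} size-$2$ bundle $B_j$ from \emph{each} agent $j$'s MMS partition (each agent has one, since $8 = 2+2+4 = 2+3+3$). If $B_1,B_2,B_3$ are pairwise disjoint, give each agent her own $2$-bundle and distribute the remaining two items arbitrarily; done. Otherwise two of them, say $B_1$ and $B_2$, intersect. Because the instance is ordered and the two bundles share all but (at most) one item, \emph{all} agents agree which of the two is the less valuable one, say $B_1$. Then $B_1$ is directly dominated for agent~$1$ (it is a bundle of her own partition) and directly or indirectly dominated for agent~$2$ (either $B_1 = B_2$, or $B_2$ meets $B_1$ in one item and $v_2(B_2)\ge v_2(B_1)$, which is exactly condition~3 of Lemma~\ref{lem:reduce2}). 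Only agent~$3$ remains: if $v_3(B_1)\ge MMS_3$ apply the lemma with $q=3$ and $B=B_1$; if $v_3(B_1)<MMS_3$ then $B_1$ is small for agent~$3$ and you apply the lemma with $q=1$ and $B=B_1$. This disposes of all cases in a few lines, with no quantitative accounting and no dependence on whether partitions are of type $(2,2,4)$ or $(2,3,3)$. You should either complete your summation/swap arguments in full or replace them with this selection-plus-ordering argument.
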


\begin{proof}
By Proposition~\ref{pro:BL} we may assume that the instance is ordered (for every $1 \le i < j \le m$ and every agent $q$,  $v_q(e_i) \ge v_q(e_j)$).

Recall (see the proof of Theorem~\ref{thm:n+5}) that we may assume that the MMS partition of an agent contains only bundles of size at least~2. Consequently, for every agent $j$, her $MMS_j$ partition contains at least one bundle (call it $B_j$) that has exactly two items.

If the three bundles $B_1$, $B_2$ and $B_3$ are disjoint, give each agent her respective bundle, and allocate the two remaining items arbitrarily.

It remains to consider the case that at least two of these bundles intersect. W.l.o.g., let these bundles be $B_1$ and $B_2$.

Suppose that $|B_1 \cap B_2| = 1$. Then as the instance is ordered, all agents agree that one of the two bundles, $B_1$ or $B_2$, is not more valuable than the other. W.l.o.g., let this bundle be $B_1$. Likewise, if $B_1 = B_2$, then also in this case $B_1$ is not more valuable than $B_2$.

There are two cases to consider:

\begin{itemize}
    \item $v_3(B_1) \ge MMS_3$. In this case Lemma~\ref{lem:reduce2} applies with agent~3 serving as agent $q$, and with $B_1$ serving as $B$. Hence an MMS allocation exists.
    \item $v_3(B_1) < MMS_3$. In this case Lemma~\ref{lem:reduce2} applies with agent~1 serving as agent $q$, and with $B_1$ serving as $B$. Hence an MMS allocation exists.
\end{itemize}
\end{proof}

\subsection{Four agents, nine items}

\begin{lemma}
\label{lem:4}
Every allocation instance with $n=4$ agents and $m = n + 5$ items has an MMS allocation.
\end{lemma}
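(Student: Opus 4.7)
Plan. By \prpref{BL} we may assume the instance is ordered, and by the reductions preceding this lemma together with \lemref{reduce1} we may assume that every agent's MMS partition has all bundles of size at least~2---hence of shape $(2,2,2,3)$ since $n=4$ and $m=9$---and that $v_j(e_1) < MMS_j$ for every agent $j$. Also, by \prpref{MMSexists}, we may assume that no single valuation is shared by three or more agents.

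The strategy is to apply \lemref{reduce2} with a carefully chosen pair $B$ and agent $q$, reducing the instance to one with three agents and seven items, which has an MMS allocation by \lemref{3} (after padding with a zero-valued item). My first candidate is $B = \{e_1, e_9\}$. For every agent $j$, choose an MMS partition whose bundles all have size at least~2 and let $X$ denote the bundle containing $e_1$. If $e_9 \in X$ then $B \subseteq X$, which is condition~2 of \lemref{reduce2}. Otherwise $X \setminus \{e_1\}$ is nonempty and contains only items from $\{e_2,\ldots,e_8\}$, each of value at least $v_j(e_9)$ by the ordering; hence $v_j(X) \geq v_j(e_1) + v_j(e_9) = v_j(B)$ and $|X \cap B| = 1$, which is condition~3. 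So whatever agent $q$ is ultimately chosen, the conditions on the other agents are automatic.

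It remains to secure $q$ with $v_q(B) \geq MMS_q$. If some agent's MMS partition has $e_9$ in a pair $\{e_k, e_9\}$, then $v_q(B) \geq v_q(\{e_k, e_9\}) \geq MMS_q$ since $v_q(e_1) \geq v_q(e_k)$. Otherwise $v_j(\{e_1, e_9\}) < MMS_j$ for every $j$, which forces $e_9 \in T_j$ in every MMS partition of every agent; I then replace $B$ by $\{e_1, e_8\}$ and repeat the same structural argument. This cascade has at most three iterations before terminating in the case that $T_j = \{e_7, e_8, e_9\}$ is fixed for every agent, at which point the effective subproblem is a pairing of $\{e_1,\ldots,e_6\}$; there I choose $q$ to be the agent whose MMS pair containing $e_6$ has the lowest-indexed partner $e_{a_q}$, set $B = \{e_{a_q}, e_6\}$, and verify the conditions by the analogous ordering argument. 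The main obstacle is an awkward intermediate case in which some agent's MMS partition is forced to have $T_j = \{e_1, e_8, e_9\}$, trapping $e_1$ with the two smallest items: for such an agent the cascade step $B = \{e_1, e_k\}$ with $k$ in the middle range can fail all three Lemma conditions, and one then needs either to exhibit an alternate MMS partition of $j$ that frees $e_1$ into a pair, or to switch to a pair $B \subseteq \{e_2,\ldots,e_7\}$ that exploits the rigid triple structure of those agents.
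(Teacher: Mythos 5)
Your reduction framework works up to a point: for $B=\{e_1,e_9\}$, and for $B=\{e_1,e_8\}$ once every agent's triple is known to contain $e_9$, every agent other than $q$ does satisfy condition~2 or~3 of \lemref{reduce2} by the ordering argument you give. But the cascade genuinely breaks at its third step, and the case you flag as ``awkward'' is not vacuous, so the proof is incomplete. Suppose every agent's triple contains $e_8$ and $e_9$, some agent $q$ has $e_7$ in a pair (so you set $B=\{e_1,e_7\}$), and some other agent $j$ has triple exactly $\{e_1,e_8,e_9\}$. This is consistent with all standing assumptions: take $v_j=(10,6,6,6,6,6,6,1,1)$, whose unique all-sizes-$\ge 2$ MMS partition has the three pairs from $\{e_2,\dots,e_7\}$ and the triple $\{e_1,e_8,e_9\}$, each of value $12=MMS_j$. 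For this $j$ all three conditions of \lemref{reduce2} fail simultaneously: $v_j(B)=v_j(e_1)+v_j(e_7)=16>MMS_j$ (condition~1 fails); $e_1$ and $e_7$ lie in different bundles (condition~2 fails); and the only bundles meeting $B$ in one item are $\{e_1,e_8,e_9\}$, of value $12<16$ because $v_j(e_8)+v_j(e_9)$ need not dominate $v_j(e_7)$, and a pair $\{e_c,e_7\}$ with $c\ge 2$, of value at most $v_j(e_1)+v_j(e_7)$ (condition~3 fails). Neither of your proposed escapes is substantiated: for this $v_j$ there is no alternative MMS partition freeing $e_1$ into a pair, and no pair $B\subseteq\{e_2,\dots,e_7\}$ is exhibited or verified. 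So the lemma is not proved.

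For comparison, the paper's proof avoids the ordering cascade entirely. It fixes agent~1's partition $(B_{1,1},B_{1,2},B_{1,3},B_{1,4})$ of shape $(2,2,2,3)$ and applies \lemref{reduce2} to a pair $B_{1,k}$ only when that pair is good for exactly one, or for none, of the other agents; in those cases the conditions for all remaining agents hold trivially (condition~1 for agents who value it below their MMS, condition~2 for agent~1). Otherwise each pair is good for at least three of the four agents, and a perfect matching between agents and the four bundles of agent~1's partition exists by Hall's condition (using that every agent values some bundle at least $\frac{1}{4}v_i(M)\ge MMS_i$), which directly yields an MMS allocation. If you wish to salvage your route, you must at minimum supply a separate argument for the $T_j=\{e_1,e_8,e_9\}$ configuration.
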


\begin{proof}
Consider an allocation instance $I$ with four agents and a set $M$ of at most nine items.
Recall (see the proof of Theorem~\ref{thm:n+5}) that we may assume that the MMS partition of an agent contains only bundles of size at least~2. Consequently, for every agent $i$, her $MMS_i$ partition contains three bundles of size two, and one bundle of size three.

Let $(B_{1,1}, B_{1,2}, B_{1,3}, B_{1,4})$ denote the $MMS_1$ partition of agent~1, with $|B_{1,1}| = |B_{1,2}| = |B_{1,3}| = 2$ and $|B_{1,4}| = 3$. Suppose that for some $k \le 3$, there is exactly one agent $i \ge 2$ for which $v_i(B_{1,k}) \ge MMS_i$. Then Lemma~\ref{lem:reduce2} applies with agent $i$ serving as agent $q$, and $B_{1,k}$ serving as bundle $B$. Hence an MMS allocation exists.

Likewise, if for some $k \le 3$ there is no agent $i \ge 2$ for which $v_i(B_{1,k}) \ge MMS_i$, Lemma~\ref{lem:reduce2} applies with agent $1$ serving as agent $q$, and $B_{1,k}$ serving as bundle $B$. Hence also in this case an MMS allocation exists.

It follows that we can assume that for each of the bundles $\{B_{1,1}, B_{1,2}, B_{1,3}\}$ there is at most one agent $2 \le i \le 4$ that values it less than her MSS.

Consider now a bipartite graph $G$. Its left hand side contains four vertices, corresponding to the four agents $\{1,2,3,4\}$. Its right hand side has four vertices, corresponding to the four bundles  $\{B_{1,1}, B_{1,2}, B_{1,3}, B_{1,4}\}$. For every $1 \le i,j \le 4$ there is an edge between agent $i$ and bundle $B_{1,j}$ if $v_i(B_{1,j}) \ge MMS_i$. Observe that a perfect matching in $G$ induces an MMS allocation, giving every agent her matched bundle. Hence it suffices to show that $G$ has a perfect matching.

Each of the right hand side vertices $B_{1,k}$ for $1 \le k \le 3$ has degree at least~3 (as at most one agent values it less than her MMS), and $B_{1,4}$ has degree at least~1 (as agent~1 values it at least as $MSS_1$). Hence for every $k \le 3$, every set of $k$ right hand side vertices has at least $k$ left hand side neighbors. Moreover, the set of all right hand side vertices has four left hand side neighbors, as for every agent $i$, at least one of the four bundles has value at least $\frac{1}{4}v_i(M) \ge MMS_i$.  Hence by Hall's condition, $G$ has a perfect matching.
\end{proof}

\subsection{Five agents, ten items}

\begin{lemma}
\label{lem:5}
Every allocation instance with $n=5$ agents and $m = n + 5$ items has an MMS allocation.
\end{lemma}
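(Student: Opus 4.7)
The plan is to extend the bipartite matching argument of Lemma~\ref{lem:4} to five agents and ten items. By the standing assumptions made in the proof of Theorem~\ref{thm:n+5} together with Proposition~\ref{pro:BL}, I may assume that the instance is ordered, that $m=10$, and that every MMS partition of every agent consists only of bundles of size at least two. Since $n=5$ and $m=10$, this forces every MMS partition to be a partition into exactly five pairs.

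Fix an MMS partition $(B_1,\ldots,B_5)$ of agent~$1$, and for each $k$ let $c_k$ be the number of agents $j\in\{2,3,4,5\}$ with $v_j(B_k)\ge MMS_j$. As in the proof of Lemma~\ref{lem:4}, if some $c_k\le 1$ I apply Lemma~\ref{lem:reduce2}: when $c_k=0$, take $q=1$ and $B=B_k$, so that condition~$1$ handles each of the other four agents; when $c_k=1$, take $q=i$ for the unique accepter $i$, with condition~$2$ covering agent~$1$ (since $B_k$ lies in her MMS partition) and condition~$1$ covering the other three. In either case the instance reduces to one with four agents and eight items, to which Lemma~\ref{lem:4} applies after padding to nine items with a zero-valued auxiliary item, and the resulting MMS allocation extends back.

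Assume henceforth $c_k\ge 2$ for every $k$, and build the bipartite graph $G$ connecting each agent $i$ to each bundle $B_k$ with $v_i(B_k)\ge MMS_i$. Every right vertex has degree at least~$3$ (agent~$1$ plus the two other accepters), and by an averaging argument every left vertex has degree at least~$1$ (in any five-partition of $M$, some bundle has value at least $v_i(M)/5\ge MMS_i$). Any perfect matching in $G$ yields an MMS allocation, so I check Hall's condition. The condition is automatic for right-subsets of sizes $1, 2, 3$ (by the right-degree lower bound) and for $|S|=5$ (by the left-degree lower bound); the only potential failure is $|S|=4$ with $|N(S)|=3$.

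The main obstacle is closing this last case. Relabel so that $S=\{B_1,B_2,B_3,B_4\}$ and the two agents $\gamma,\delta$ outside $N(S)$ accept only $B_5$; by averaging (each of the other four bundles is strictly below their MMS), $v_\gamma(B_5)>MMS_\gamma$ and $v_\delta(B_5)>MMS_\delta$, both strictly. I close the case by a second application of Lemma~\ref{lem:reduce2}, this time with $q=\gamma$ and $B=B_5$: agent~$1$ is covered by condition~$2$; for each of the remaining agents, condition~$1$ handles the subcase where her value of $B_5$ does not exceed her MMS, and when it does exceed, a structural argument using the pair-only MMS partitions, the no-single-item-above-MMS hypothesis, and the ordering either exhibits an MMS partition containing $B_5$ as a bundle (condition~$2$, obtained by exchanging the two bundles of the agent's MMS partition originally containing the items of $B_5$) or a bundle of such an MMS partition sharing exactly one item with $B_5$ whose value is at least $v_\cdot(B_5)$ (condition~$3$). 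I expect this structural verification to require a careful case split on how the two items of $B_5$ sit inside the agent's MMS partition, with the ordering used in an essential way to rule out the remaining configurations; once it succeeds, the reduced instance again falls under Lemma~\ref{lem:4}.
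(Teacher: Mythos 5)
Your handling of the cases $c_k\le 1$ and of Hall's condition for $|S|\in\{1,2,3,5\}$ is correct, but the proof is not complete: the whole difficulty of the lemma is concentrated in the case $|S|=4$, $|N(S)|=3$, and the ``structural argument'' you defer to there is false in the generality in which you invoke it. You need: for an agent $j$ with $v_j(B_5)>MMS_j$, some MMS partition of $j$ (all of whose bundles are pairs, with no single item reaching $MMS_j$) either contains $B_5$ as a bundle (condition~2 of Lemma~\ref{lem:reduce2}) or contains a pair meeting $B_5$ in exactly one item and worth at least $v_j(B_5)$ (condition~3). Counterexample: let $v_j$ assign the ten items the values $9,9,5,5,5,5,5,5,1,1$ and let $B_5$ consist of the two items of value $9$. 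The total is $50$, the partition $\{9,1\},\{9,1\},\{5,5\},\{5,5\},\{5,5\}$ shows $MMS_j=10$, and no item reaches $10$. Yet $v_j(B_5)=18>10$; no MMS partition can pair the two $9$'s together (the remaining eight items sum to $32<4\cdot 10$); and every pair containing exactly one of the two $9$'s is worth at most $9+5=14<18$. So all three conditions of Lemma~\ref{lem:reduce2} fail for this $j$ with $B=B_5$. This valuation is compatible with the instance being ordered and with $B_5$ being a pair of agent~$1$'s partition (e.g.\ agent~$1$ values all items equally), so none of the hypotheses you list rules it out. To rescue your approach you would have to exploit the further constraints of the specific Hall-failure configuration (that $j$ rejects all of $B_1,\dots,B_4$, etc.), and you neither identify those constraints nor show they suffice; as written, the key step is missing.

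The paper closes the lemma by choosing the bundle to remove differently, and never needs a matching argument. Since every MMS partition is into five pairs, each agent $i$ has a pair $B_i$ containing the globally most valuable item $e_1$; among these five pairs, take the one, call it $B$, whose second item is cheapest in the global order $\pi$, and let $q$ be an agent having $B$ in her partition. For every other agent $i$, the pair $B_i$ lies in her own MMS partition, satisfies $v_i(B_i)\ge v_i(B)$ by the ordering, and intersects $B$ in exactly the item $e_1$ unless $B_i=B$; hence condition~2 or~3 of Lemma~\ref{lem:reduce2} holds for \emph{every} agent simultaneously, and a single application of that lemma reduces to the four-agent case. The moral is that the freedom in choosing which pair to delete must be exercised using the global order, rather than fixed in advance as a bundle of agent~$1$'s partition as in your Lemma~\ref{lem:4} template.
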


\begin{proof}
Let $I$ be an arbitrary allocation instance with $5$ agents and $10$ items.
Recall (see the proof of Theorem~\ref{thm:n+5}) that we may assume that the MMS partition of an agent contains only bundles of size at least~2. As $m = 2n$, this implies that for every agent $i$, all bundles of her $MMS_i$ partition are of size two.

By Proposition~\ref{pro:BL} we may assume that the instance is ordered (for every $1 \le i < j \le m$ and every agent $q$,  $v_q(e_i) \ge v_q(e_j)$). For every agent $i$, consider the bundle $B_i$ in her MMS partition that contains the item $e_1$. This gives five bundles (not necessarily all distinct). Among these bundles, consider the bundle $B$ in which the second item of the bundle has highest index (lowest value). Then for every agent $i$ we have that $v_i(B) \le v_i(B_i)$, because the instance is ordered.
Let $q$ be an agent that has $B$ as a bundle in her MMS partition (if there is more that one such agent, pick one arbitrarily).
Lemma~\ref{lem:reduce2} (condition~3 in the lemma) implies $I$ has an MMS allocation.
\end{proof}



\section{Tightness of MMS ratio for nine items}

In this section we prove Theorem~\ref{thm:tightGap}, showing that every allocation instance with three agents and nine items has an allocation that gives each agent at least a $\frac{39}{40}$ of her MMS. The proof has three steps.

\begin{enumerate}
\item The proof of Theorem~\ref{thm:structure} that shows that a negative example can have only one of two possible structures. 

\item Each structure induces linear constraints on the valuation functions of the agents. For each structure, we set up a linear program that finds a solution that satisfies all linear constraints implied by the corresponding structure, while maximizing the MMS gap in that solution.
These LPs are under-constrained, and the optimal feasible solutions of these LPs turn out not to correspond to true negative examples. Hence we need to add additional constraints to the LPs, preventing the LPs from producing solutions that are not true negative examples.

\item For each of the two structures, we partition all potential negative examples that have this structure into a finite number of classes, where each class offers some refinement of the structure. The classes need not be disjoint. The refined structure of a class gives rise to additional constraints to the LP. Thus we end up with a finite number of different LPs, one for each class. We then verify that none of these LPs generates a negative example with MMS gap larger than $\frac{1}{40}$ (this is done by having a computer program solve the corresponding LPs), and this proves Theorem~\ref{thm:tightGap}.
\end{enumerate}

\subsection{Only two possible structures}

We first present the theoretical analysis that leads to Theorem~\ref{thm:structure}. 

Let $I$ be an instance with three agents $\{1,2,3\}$ and a set $M$ of nine items $\{e_1, \ldots, e_9\}$.
By Proposition~\ref{pro:BL}, we assume without loss of generality that the instance is {\em ordered} (for every $i < j$, all agents agree that item $e_i$ has value at least as large as item $e_j$).




We say that a bundle $B$ of items is {\em good} for agent $i$ if $v_i(B) \ge MMS_i$, and {\em bad} for $i$ otherwise.

Every agent $i$ has a partition of the items into three bundles $B_{i1}, B_{i2}, B_{i3}$, such that each of these bundles is good for $i$. Fix for each agent such a partition. We may assume that every bundle is of size at least two (recall Lemma~\ref{lem:reduce1}).

\begin{proposition}
\label{pro:containment}
If a bundle in one partition contains a bundle from another partition (containment need not be strict), then $I$ has an MMS allocation.
\end{proposition}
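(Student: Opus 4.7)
The plan is to let $B = B_{j,\ell}$ with $B_{j,\ell} \subseteq B_{i,k}$ (the given containment), denote by $h$ the third agent, and reduce to a two-agent subproblem on $M \setminus B$. Since \prpref{MMSexists} guarantees an MMS allocation for every two-agent instance, the task is to assign $B$ to some specific agent and then certify that the MMS of each of the remaining two agents is preserved (or exceeded) in the residual instance. I would split into two cases according to whether $v_h(B) > MMS_h$ or $v_h(B) \le MMS_h$.

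In the case $v_h(B) > MMS_h$, the bundle $B$ is good for all three agents, so I would give $B$ to $h$ and reduce to the two-agent instance $I'$ on $M \setminus B$ with agents $i$ and $j$. The two-partition $(B_{j,\ell'}, B_{j,\ell''})$ of $M \setminus B$, using agent $j$'s remaining MMS bundles, certifies that $MMS_j(I') \ge MMS_j(I)$. Similarly, the two-partition $(B_{i,k'}, \, B_{i,k''} \cup (B_{i,k} \setminus B))$ of $M \setminus B$ certifies $MMS_i(I') \ge MMS_i(I)$: the first bundle is good for $i$ by definition, and the second is good for $i$ by monotonicity, since it contains the good bundle $B_{i,k''}$ together with additional items of nonnegative value.

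In the case $v_h(B) \le MMS_h$, I would give $B$ to $j$ and consider the two candidate completions on $M \setminus B$ in which $i$ receives one of $B_{i,k'}$ or $B_{i,k''}$ (both good for $i$ and disjoint from $B \subseteq B_{i,k}$) and $h$ receives the complement. The two possible bundles for $h$ are $B_{i,k''} \cup (B_{i,k} \setminus B)$ and $B_{i,k'} \cup (B_{i,k} \setminus B)$, and their $v_h$-values sum to $v_h(M) + v_h(B_{i,k}) - 2v_h(B)$. Combining $v_h(M) \ge 3 MMS_h$, the containment inequality $v_h(B_{i,k}) \ge v_h(B)$, and the case hypothesis $v_h(B) \le MMS_h$, an averaging bound shows that the larger of the two candidate bundles has $v_h$-value at least $MMS_h$, so one of the two completions is an MMS allocation.

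The main point requiring care is the averaging step in the second case, where the containment inequality and the case hypothesis must be combined in the right order. Notably, the argument is uniform in the size $|B|$ (which can be anywhere from $2$ to $5$ items under our assumption that bundles have size at least $2$), so it sidesteps the two-item restriction of \lemref{reduce2}.
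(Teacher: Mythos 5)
Your proof is correct and follows essentially the same route as the paper's: the same case split on whether the contained bundle $B$ is good for the third agent $h$, giving $B$ to $h$ in the first case and to its owner $j$ in the second, with the same averaging argument over the two completions $(B_{i,k}\setminus B)\cup B_{i,k'}$ and $(B_{i,k}\setminus B)\cup B_{i,k''}$ to serve $h$. The only cosmetic difference is that in the first case you package the conclusion as a reduction to a two-agent instance via \prpref{MMSexists}, where the paper writes out the resulting allocation explicitly.
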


\begin{proof}
W.l.o.g., assume that $B_{1,1} \subseteq B_{2,1}$. If $B_{1,1}$ is good for agent 3, then give $B_{1,1}$ to agent~3. As $(B_{2,2} \cup B_{2,3}) \subseteq (B_{1,2} \cup B_{1,3})$, we have that $v_2(B_{1,2}) + v_2(B_{1,3}) \ge v_2(B_{2,2}) + v_2(B_{2,3}) \ge 2MMS_2$.
Give agent~2 whichever of the bundles $B_{1,2}$ or $B_{1,3}$ she values as at least $MMS_2$, and give the remaining bundle to agent~1, resulting in an MMS allocation.

If $B_{1,1}$ is bad for agent~3, give $B_{1,1}$ to agent~1, and give agent~3 whichever items are in $B_{2,1} \setminus B_{1,1}$, and in addition, whichever bundle she prefers over $B_{2,2}$ and $B_{2,3}$, thus giving her value at least $\frac{1}{2}v_3(M \setminus B_{1,1}) \ge \frac{1}{3}v_3(M) \ge MMS_3$. Give the remaining bundle to Agent~2. Every agent gets at least her MMS.
\end{proof}

We can conclude that the nine bundles are distinct.

\begin{proposition}
\label{pro:symmetricDifference}
If there are two bundles (necessarily, of different agents) $B$ and $B'$ such that $|B \setminus B'| = |B' \setminus B| = 1$, then $I$ has an MMS allocation.
\end{proposition}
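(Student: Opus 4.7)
The plan is to use the ordering of the instance together with the hypothesis $|B \setminus B'| = |B' \setminus B| = 1$ to force a universal preference between $B$ and $B'$, hand one of them to a chosen agent, and then reduce the allocation of the remaining items to a two-agent subproblem solved by the ``one partitions, the other picks'' argument. Write $B = X \cup \{a\}$ and $B' = X \cup \{b\}$, and let agents~1 and~2 be the owners of $B$ and $B'$ respectively. The ordering forces all agents to rank $a,b$ the same way; after possibly swapping the pair of labels $(B,\text{agent}~1)$ with $(B',\text{agent}~2)$, I may assume $v_i(a) \ge v_i(b)$ for every $i$, whence $v_i(B) \ge v_i(B')$ for every $i$, and in particular $v_2(B) \ge MMS_2$ so that $B$ is good for both agents~1 and~2.

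The proof then splits on whether $v_3(B') \ge MMS_3$. If so, give $B'$ to agent~3 and let agent~1 pick her preferred of the two bundles $B_{2,2}, B_{2,3}$ from agent~2's MMS partition, giving the remaining one to agent~2. Agent~1's pick has value at least $\tfrac12\, v_1(M \setminus B') \ge MMS_1$, where the bound uses $v_1(B') \le v_1(B) \le v_1(M) - 2\, MMS_1$ (the last step because $v_1(B_{1,2}), v_1(B_{1,3}) \ge MMS_1$). Otherwise $v_3(B') < MMS_3$ strictly; give $B'$ to agent~2, and let agent~1 propose the 2-partition $\bigl(B_{1,3},\, \{a\} \cup (B_{1,2} \setminus \{b\})\bigr)$ of $M \setminus B'$ (where without loss of generality $b \in B_{1,2}$; the standing reduction to bundles of size $\ge 2$ guarantees that $B_{1,2} \setminus \{b\}$ is nonempty). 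Agent~3 then picks her preferred side, receiving strictly more than $MMS_3$ since $v_3(M \setminus B') = v_3(M) - v_3(B') > 3\, MMS_3 - MMS_3 = 2\, MMS_3$; agent~1 gets the other.

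The main obstacle is to verify that agent~1's proposed 2-partition in the second case is actually good for her, i.e.\ that both sides have value at least $MMS_1$. For $B_{1,3}$ this is immediate. For the other side, $v_1\bigl(\{a\} \cup (B_{1,2} \setminus \{b\})\bigr) = v_1(B_{1,2}) + v_1(a) - v_1(b) \ge v_1(B_{1,2}) \ge MMS_1$, where the key inequality is precisely $v_1(a) \ge v_1(b)$. This ``exchange $b$ for $a$'' move is the place where the symmetric-difference hypothesis, together with the ordering, does the real work; without this swap there is no reason for a natural 2-partition of $M \setminus B'$ to respect agent~1's MMS when $v_1(B')$ is close to $v_1(B)$, and the argument would break down.
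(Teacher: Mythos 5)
Your proof is correct and follows essentially the same route as the paper's: use the ordering to get a unanimous comparison of $B$ and $B'$, split on whether the less valuable of the two is good for the third agent, and in either case rebuild a good $2$-partition of the remaining items for one structured agent by swapping the differing item, finishing with a divide-and-choose step. The only difference is a relabeling of which agent owns the discarded bundle, which does not change the substance of the argument.
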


\begin{proof}
W.l.o.g., assume that $|B_{1,1} \setminus B_{2,1}| = |B_{2,1} \setminus B_{1,1}| = 1$. As the instance is ordered and the two bundles share all but one item, it follows that all {agents} view one bundle, without loss of generality let it be $B_{2,1}$, as at least as valuable as the other. Observe that if $B_{1,1}$ is given to either agent~1 or agent~3, then agent~2 can partition the remaining items into two bundles that each has value {of at least} $MSS_2(I)$  (in her bundle that contains the item $B_{1,1} \setminus B_{2,1}$, replace that item by the item $B_{2,1} \setminus B_{1,1}$).  If $B_{1,1}$ is good for agent~3, then give $B_{1,1}$ to agent~3. In the instance that remains, each of the remaining agents has two disjoint bundles of value at least as high as her original MMS, and hence an MMS allocation exists. If $B_{1,1}$ is bad for agent~3, give $B_{1,1}$ to agent~1. Agent~2 can then partition the remaining items into two bundles as explained above. Agent~3 chooses among them the bundle that she prefers, thus getting at least $\frac{v_3(M)-v_3(B_{1,1})}{2} \ge MMS_3$. Agent~2 gets the remaining bundle, and hence at least her MMS.
\end{proof}

\begin{corollary}
\label{cor:size3}
Suppose that for two agents $i$ and $j$ every bundle in their MMS partition is of size three. Then either for every bundle of $i$ and every bundle of $j$ the intersection has exactly one item, or $I$ has an MMS allocation.
\end{corollary}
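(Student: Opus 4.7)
The plan is to do a straightforward case analysis on the $3\times 3$ matrix $M_{a,b}=|B_{i,a}\cap B_{j,b}|$ of intersection sizes, whose rows and columns each sum to $3$ since each partition covers all nine items and every bundle has size exactly three. The goal is to show that if any entry of $M$ differs from $1$, then the earlier reduction lemmas already produce an MMS allocation, leaving only the case where every entry equals $1$.

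I would proceed case by case on the possible values of an entry $M_{a,b}\in\{0,1,2,3\}$. If $M_{a,b}=3$ for some $(a,b)$, then $B_{i,a}=B_{j,b}$, and Proposition~\ref{pro:containment} (containment need not be strict) yields an MMS allocation. If $M_{a,b}=2$ for some $(a,b)$, then because $|B_{i,a}|=|B_{j,b}|=3$, we have $|B_{i,a}\setminus B_{j,b}|=|B_{j,b}\setminus B_{i,a}|=1$, and Proposition~\ref{pro:symmetricDifference} applies, again producing an MMS allocation.

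The last case to rule out is $M_{a,b}=0$ for some $(a,b)$. Here the key observation is that the row containing this $0$ still has sum $3$ (its three entries are $|B_{j,b}\cap B_{i,1}|,|B_{j,b}\cap B_{i,2}|,|B_{j,b}\cap B_{i,3}|$ and together account for all three elements of $B_{j,b}$). Since one of them is $0$, the remaining two entries sum to $3$, so by pigeonhole at least one is $\ge 2$. That entry falls into one of the two previous cases, and Proposition~\ref{pro:containment} or Proposition~\ref{pro:symmetricDifference} produces an MMS allocation.

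Thus the only surviving possibility is that every entry of $M$ equals $1$, which is precisely the statement that every bundle of $i$ meets every bundle of $j$ in exactly one item. There is no real obstacle in this argument; the only thing to be careful about is stating clearly that when $|B_{i,a}\cap B_{j,b}|=2$ and both bundles have size $3$, the hypothesis of Proposition~\ref{pro:symmetricDifference} is satisfied, and that the pigeonhole step in the $0$-case legitimately reduces to a case already handled rather than creating a circular dependency.
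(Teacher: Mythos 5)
Your proof is correct and follows essentially the same route as the paper's: identical bundles are handled by Proposition~\ref{pro:containment}, intersections of size two by Proposition~\ref{pro:symmetricDifference}, and a disjoint pair forces some other bundle of the second agent to intersect in two or three items, reducing to the earlier cases. Your matrix bookkeeping just makes the paper's one-line pigeonhole step explicit (note only that the three entries you list vary the first index, so they form a column of $M$ rather than a row, which is immaterial).
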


\begin{proof}
If two bundles of different agents are identical, the corollary follows from Proposition~\ref{pro:containment}. If they have two items in their intersection, the corollary follows from Proposition~\ref{pro:symmetricDifference}. If they are disjoint, then another bundle of the second agent is either identical to the bundle of the first agent, or their intersection has two items.
\end{proof}

\begin{proposition}
\label{pro:size3}
If every bundle is of size three, then $I$ has an MMS allocation.
\end{proposition}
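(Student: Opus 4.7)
My plan is to leverage the rigid combinatorial structure that Corollary~\ref{cor:size3} imposes on three pairwise-orthogonal triple-partitions of a $9$-element set. Since any two bundles from distinct agents' MMS partitions meet in exactly one item, a standard Latin-square argument identifies $M$ with $\mathbb{Z}_3 \times \mathbb{Z}_3$ so that the MMS partitions of agents $1,2,3$ become the row, column, and sum-diagonal parallel classes $R,C,D$ of the affine plane $\mathrm{AG}(2,3)$. Let $E$ denote the remaining (difference-diagonal) parallel class. Any three pairwise disjoint $3$-subsets of $M$ coincide with one of the four classes $R, C, D, E$, so every ``aligned'' allocation assigns the three bundles of a single parallel class to the three agents.

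For each parallel class $\mathcal{P}$, I would form the bipartite graph $H_\mathcal{P}$ whose edges connect agent $i$ to each bundle $B \in \mathcal{P}$ with $v_i(B) \ge MMS_i$; a perfect matching in $H_\mathcal{P}$ yields an MMS allocation. Averaging ($\sum_{B \in \mathcal{P}} v_i(B) = v_i(M) \ge 3\,MMS_i$) gives that every agent has at least one good bundle in each $\mathcal{P}$. Assuming no $H_\mathcal{P}$ has a perfect matching, Hall's theorem applied to $H_R$ (in which the owner, agent $1$, is adjacent to all three rows) forces agents $2$ and $3$ to uniquely prefer a common row, which we may relabel as $R_0$. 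Symmetric analyses of $H_C$ and $H_D$ produce popular bundles $C_0$ and $D_0$. After applying a translation/scaling of $\mathbb{Z}_3^2$ preserving all four parallel classes, one of two sub-cases holds: either $R_0 \cap C_0 \cap D_0 = \{(0,0)\}$, or the three popular bundles intersect pairwise at three distinct points.

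These popularity conditions produce strict inequalities $v_i(R_0),v_i(C_0),v_i(D_0) > MMS_i$ (for the appropriate agents) and $v_i(R_j), v_i(C_j), v_i(D_j) < MMS_i$ for $j \ne 0$, together with $v_i(M) \ge 3\,MMS_i$. Combining these with inclusion--exclusion identities such as
\[
v_i(R_0) + v_i(C_0) + v_i(D_0) = v_i(R_0 \cup C_0 \cup D_0) + 2\, v_i(p^*),
\]
where $p^*$ is the common popular item (in the first sub-case), pins $v_i(p^*)$ down from below and constrains $v_i(E_k)$ in terms of these quantities. The plan is then to combine these with the constraints from the failure of $H_E$ (which itself splits into several variants: either two agents share a common unique good $E$-bundle, or some $E$-bundle is bad for all three agents) to derive a contradiction; or, if the inequalities admit consistent solutions, to construct an explicit \emph{non-aligned} MMS allocation by spreading the items of $E_0$ (the $E$-line through $p^*$) across three bundles, each completed by two items of $E_1 \cup E_2$, and matching them to the three agents using the now-large value $v_i(p^*)$.

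The main obstacle will be the proliferation of sub-cases induced by the geometric configuration of $R_0, C_0, D_0$ and by the several failure modes of $H_E$; I expect each case to close either by contradicting $v_i(M) \ge 3\,MMS_i$ via summing inequalities across the three agents, or by explicit allocation, but the bookkeeping will require care.
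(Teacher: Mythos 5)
Your reduction to the affine plane $\mathrm{AG}(2,3)$ is sound: Corollary~\ref{cor:size3} does force the three partitions to be three of the four parallel classes (two orthogonal partitions give the grid, and a third partition orthogonal to both is a decomposition into permutation matrices, i.e.\ a Latin square of order~3, all of which are isotopic to $\mathbb{Z}_3$). The Hall-failure analysis of $H_R$, $H_C$, $H_D$ is also correct as far as it goes. The genuine gap is that the argument is never closed: everything after identifying the popular bundles $R_0, C_0, D_0$ is a programme rather than a proof. You explicitly leave open whether the resulting system of strict inequalities is contradictory, and in the case where it is not, you only gesture at a non-aligned allocation (``spreading the items of $E_0$ across three bundles'') without verifying that any agent values the bundle she receives at her MMS. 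The several failure modes of $H_E$ are likewise enumerated but not resolved. Since the whole point of the proposition is to rule out a negative example, an unfinished case analysis --- where you yourself concede the cases might ``admit consistent solutions'' --- does not establish the claim.

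The paper avoids all of this with a two-line swap argument that you should note, because it uses an assumption you never invoke: by Proposition~\ref{pro:BL} the instance may be taken to be \emph{ordered}, so all agents agree on which item $e$ is least valuable. Let $B_i$ be agent $i$'s bundle containing $e$; by Corollary~\ref{cor:size3} these three bundles pairwise intersect exactly in $\{e\}$, so their union has seven items and two items $f_1, f_2$ lie outside all of them, each worth at least $v_i(e)$ to every agent $i$. Giving $B_3$ to agent~3, $(B_2 \setminus \{e\}) \cup \{f_2\}$ to agent~2, and $(B_1 \setminus \{e\}) \cup \{f_1\}$ to agent~1 yields three disjoint bundles, each worth at least the corresponding agent's MMS. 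This produces exactly the kind of non-aligned allocation your plan reaches for in its final case, but constructs it directly and unconditionally, with no case analysis on which lines are popular. If you want to salvage your approach, the honest assessment is that it would require either completing a substantial and delicate enumeration or importing the ordering assumption anyway --- at which point the direct construction supersedes it.
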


\begin{proof}
Let $e$ be the item of smallest value according to the order $\pi$, and for every agent $i$, let $B_i$ be her bundle that contains $e$. Except for containing $e$, these three bundles are disjoint. Hence there are two items, $e_1$ and $e_2$, that are in none of these bundles, and each of them is worth at least as much as $e$ for every agent. Give $B_3$ to agent~3, give $(B_2 \setminus \{e\}) \cup \{e_2\}$ to agent~2, and  give $(B_1 \setminus \{e\}) \cup \{e_1\}$ to agent~1.
\end{proof}

\begin{proposition}
\label{pro:uniqueGood}
Suppose that the following condition does not hold: among the three bundles of agent~1, there is a bundle that is good both for agent 2 and for agent 3, and the remaining two bundles are bad both for agent 2 and agent 3. Then $I$ has an MMS allocation.
\end{proposition}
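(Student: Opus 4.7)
The plan is to use a simple Hall-type matching argument on the three bundles of agent~1. Let $G_2 \subseteq \{B_{1,1},B_{1,2},B_{1,3}\}$ denote the set of bundles that are good for agent~2, and define $G_3$ analogously for agent~3. If we can find two \emph{distinct} bundles $B$ and $B'$ among agent~1's partition with $B \in G_2$ and $B' \in G_3$, then allocating $B$ to agent~2, $B'$ to agent~3, and the third bundle to agent~1 produces an MMS allocation (the remaining bundle is good for agent~1 by construction).

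I would first observe that $G_2$ and $G_3$ are both nonempty. Indeed, since $v_i(B_{1,1})+v_i(B_{1,2})+v_i(B_{1,3}) = v_i(M) \ge 3\,\MMSi$ by definition of MMS (any MMS-partition certifies $v_i(M) \ge 3\,\MMSi$), at least one of the three bundles satisfies $v_i(B_{1,k}) \ge \MMSi$, so it lies in $G_i$. This gives $|G_2|, |G_3| \ge 1$.

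Next, I would run the Hall check on the bipartite graph with left vertices $\{2,3\}$ and right vertices $\{B_{1,1},B_{1,2},B_{1,3}\}$, where agent $i$ is adjacent to its good bundles. A perfect matching on the left side exists unless Hall's condition fails for some subset of the left; the only nontrivial case is the subset $\{2,3\}$, which requires $|G_2 \cup G_3| \ge 2$. If this holds, a matching exists and we are done. Otherwise $|G_2 \cup G_3| = 1$, which, combined with $G_2, G_3 \ne \emptyset$, forces $G_2 = G_3 = \{B\}$ for some single bundle $B$ of agent~1. But this says exactly that $B$ is good for both agent~2 and agent~3 while the other two bundles of agent~1 are bad for both — the very condition the proposition assumes does not hold. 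Hence a matching (and therefore an MMS allocation) must exist.

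There is essentially no obstacle here; the only thing to be careful about is the averaging step, which requires noting that the existence of an MMS partition for agent~$i$ implies $v_i(M) \ge 3\,\MMSi$, and hence the maximum-value bundle in any fixed $3$-partition of $M$ has value at least $\MMSi$ for agent~$i$.
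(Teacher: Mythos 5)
Your proof is correct and follows essentially the same route as the paper: both arguments rest on the averaging observation that each of agents $2$ and $3$ values at least one of agent~$1$'s three bundles at $\tfrac{1}{3}v_j(M)\ge MMS_j$, and then note that the negation of the stated condition lets these two good bundles be chosen distinct, after which agent~$1$ takes the remaining bundle. The Hall's-condition packaging is a harmless reformulation of the paper's direct case selection.
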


\begin{proof}
Observe that for every $j \in \{2,3\}$, there is at least one bundle $B$ of agent~1 with $v_j(B) \ge \frac{1}{3}v_j(M) \ge MMS_j$. If the condition of the proposition does hold, then without loss of generality, $B_{1,2}$ is good for agent~2, and $B_{1,3}$ is good for agent~3. For every $j \in \{1,2,3\}$, give bundle $B_{1,j}$ to agent $j$.
\end{proof}

By symmetry, Proposition~\ref{pro:uniqueGood} applies also if we permute the names of agents. Permuting also the names of the bundles, we can thus assume the following:

\begin{itemize}
    \item For every agent $i$, bundle $B_{i,1}$ is good for both other agents, and bundles $B_{i,2}$ and $B_{i,3}$ are bad for both other agents.
\end{itemize}

Observe that for every $i \not= j$, agent $i$ values bundle $B_{j,1}$ strictly more than $B_{i,1}$, because she values $B_{i,2}$ and $B_{i,3}$ strictly more than $B_{j,2}$ and $B_{j,3}$. Likewise, agent $j$ values bundle $B_{i,1}$ strictly more than $B_{j,1}$.

\begin{proposition}
\label{pro:intersectGodBad}
If the good bundle in the partition of agent $i$ does not intersect a bad bundle in the partition of agent $j \not=i$, then $I$ has an MMS allocation.
\end{proposition}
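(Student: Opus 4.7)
The plan is to exhibit an explicit MMS allocation under the hypothesis. After renaming agents by symmetry, I may assume the non-intersection occurs between $i=1$ and $j=2$, i.e.\ $B_{1,1} \cap B_{2,2} = \emptyset$. Since $B_{1,1}, B_{1,2}, B_{1,3}$ partition $M$, this is equivalent to the containment $B_{2,2} \subseteq B_{1,2} \cup B_{1,3}$, which is the only structural fact I will use.

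The proposed allocation gives $B_{1,1}$ to agent $3$, $B_{2,2}$ to agent $2$, and the remaining items $M \setminus (B_{1,1} \cup B_{2,2}) = (B_{1,2} \cup B_{1,3}) \setminus B_{2,2}$ to agent $1$; this is a genuine partition precisely because $B_{1,1}$ and $B_{2,2}$ are disjoint. Agents $2$ and $3$ are immediately satisfied: by the normal form established via Proposition~\ref{pro:uniqueGood}, $B_{1,1}$ is good for agent $3$, while $B_{2,2}$ is a bundle of agent $2$'s own MMS partition.

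The only point requiring care is agent $1$'s utility. By additivity,
$$v_1\bigl((B_{1,2} \cup B_{1,3}) \setminus B_{2,2}\bigr) \;=\; v_1(B_{1,2}) + v_1(B_{1,3}) - v_1(B_{2,2}) \;\ge\; 2\,MMS_1 - v_1(B_{2,2}),$$
and the normal form tells us that $B_{2,2}$ is a bad bundle of agent $2$ for both other agents, so in particular $v_1(B_{2,2}) < MMS_1$. Hence agent $1$ receives strictly more than $MMS_1$, and all three agents meet their MMS threshold. I do not anticipate a real obstacle: the entire argument rides on the earlier normal form, which guarantees that any bundle labelled ``bad'' is bad for both of the non-owning agents, so that removing $B_{2,2}$ from the pair of agent $1$'s own MMS-bundles $B_{1,2} \cup B_{1,3}$ leaves her strictly above $MMS_1$.
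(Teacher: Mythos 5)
Your proposal is correct and follows exactly the paper's argument: give the universally good bundle $B_{1,1}$ to agent 3, give $B_{2,2}$ to agent 2, and hand the remainder $(B_{1,2}\cup B_{1,3})\setminus B_{2,2}$ to agent 1, whose value exceeds $2\,MMS_1 - v_1(B_{2,2}) > MMS_1$ since $B_{2,2}$ is bad for agent 1. You merely spell out the additivity computation that the paper states in one sentence.
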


\begin{proof}
Suppose that $B_{1,1}$ does not intersect $B_{2,2}$. Give $B_{1,1}$ to agent~3 (recall that $B_{1,1}$ is good for all agents), give $B_{2,2}$ to agent~2, and give the remaining items to agent~1. These items form a good bundle for agent~1, because we removed from the grand bundle one of his original bundles ($B_{1,1}$) and a bundle ($B_{2,2}$) that is bad for agent~1.
\end{proof}

\begin{proposition}
\label{pro:2good}
If an agent~$i$ has a bundle $B$ of size two in her partition, then either this bundle is good for the other two agents, or $I$ has an MMS allocation.
\end{proposition}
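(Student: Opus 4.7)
The plan is to exploit the structural simplification already in force from \prpref{uniqueGood}: after relabeling, for every agent~$i$ the bundle $B_{i,1}$ is good for both other agents, while $B_{i,2}$ and $B_{i,3}$ are each bad for both other agents. The size-two bundle $B$ mentioned in the proposition is then $B_{i,\ell}$ for some $\ell \in \{1,2,3\}$, and I would split the argument on the value of $\ell$.

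For $\ell = 1$ the structural assumption directly states that $B_{i,1}$ is good for both other agents, so the first alternative of the proposition is immediate. For $\ell \in \{2,3\}$ the same assumption gives $v_j(B) < MMS_j(I)$ for both $j \neq i$, and I would apply \lemref{reduce2} with $q = i$ and the bundle $B$ of size two: the hypothesis $v_i(B) \ge MMS_i(I)$ holds because $B$ is a bundle in agent~$i$'s own MMS partition, and condition~1 of that lemma (``$B$ is small'') holds for each other agent $j$ since $B$ is bad for $j$. To invoke \lemref{reduce2} I also need every $(n-1,m-2) = (2,7)$ instance to admit an MMS allocation; this is supplied by \prpref{MMSexists}, because in any two-agent instance the condition ``all agents but one have the same valuation function'' is vacuous. \lemref{reduce2} then produces an MMS allocation for $I$, establishing the second alternative.

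Since the argument is essentially a direct invocation of \lemref{reduce2}, there is no real technical obstacle, and the bulk of the work is simply unpacking the structural labeling inherited from \prpref{uniqueGood}. The one point that deserves attention is that this labeling forces any size-two bundle that fails to be good for both other agents to be \emph{doubly} bad, which is exactly what allows condition~1 of \lemref{reduce2} to fire simultaneously for both $j \neq i$. Without this observation, one would be left with the awkward mixed situation in which $B$ is good for one other agent and bad for the other, and conditions~2 and~3 of \lemref{reduce2} are not obviously available for an arbitrary two-element bundle of this kind.
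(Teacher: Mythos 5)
Your proof is correct and follows essentially the same route as the paper: after the relabeling forced by Proposition~\ref{pro:uniqueGood}, a size-two bundle that is not $B_{i,1}$ is bad for both other agents, so condition~1 of Lemma~\ref{lem:reduce2} applies to each of them and yields an MMS allocation. The paper states this in one line; your version merely makes explicit the case split on $\ell$ and the two-agent base case from Proposition~\ref{pro:MMSexists}, both of which the paper leaves implicit.
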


\begin{proof}
{Suppose that for agent~$i$ bundle $B_{i,2}$ (which is bad for the other two agents) is of size two. Then Lemma~\ref{lem:reduce2} implies that an MMS allocation exists.}
\end{proof}

Observe that the combination of Proposition~\ref{pro:uniqueGood} and Proposition~\ref{pro:2good} implies that an agent cannot have two bundles of size two in his partition.

\begin{corollary}
\label{cor:2disjoint}
If $|B_{1,1} = 2|$, then either the good bundles of the other agent are disjoint from $B_{1,1}$, or $I$ has an MMS allocation.
\end{corollary}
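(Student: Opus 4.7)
The plan is to prove the contrapositive: if the good bundle $B_{2,1}$ of agent~2 meets $B_{1,1}$, then $I$ has an MMS allocation; the case in which $B_{3,1}$ meets $B_{1,1}$ is entirely symmetric. The first step is to dispose of the possibility that $B_{1,1} \subseteq B_{2,1}$: this would be a containment of one bundle inside a bundle from a different partition, so Proposition~\ref{pro:containment} would already hand us an MMS allocation. We may therefore assume $|B_{2,1} \cap B_{1,1}| = 1$.

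The heart of the argument is then a tiny pigeonhole count on $B_{1,1}$. Since $|B_{1,1}| = 2$ and exactly one of its two items lies in $B_{2,1}$, the other item lies in exactly one of $B_{2,2}$ or $B_{2,3}$, leaving the remaining bundle of agent~2's partition disjoint from $B_{1,1}$. Call this disjoint bundle $B_{2,k'}$; it is either $B_{2,2}$ or $B_{2,3}$, and by the standing convention inherited from Proposition~\ref{pro:uniqueGood} it is therefore a bad bundle of agent~2. Consequently the good bundle $B_{1,1}$ of agent~1 fails to intersect some bad bundle in the partition of agent~2, and Proposition~\ref{pro:intersectGodBad} delivers an MMS allocation. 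The same argument with the roles of agents~2 and~3 swapped handles the hypothesis that $B_{3,1} \cap B_{1,1} \ne \emptyset$.

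This is a short two-step argument (Proposition~\ref{pro:containment} followed by Proposition~\ref{pro:intersectGodBad}), connected by a trivial pigeonhole observation, so there is no real obstacle. The only thing one has to keep in mind is that as soon as $|B_{1,1}|=2$ and one item of $B_{1,1}$ lies in $B_{2,1}$, the partition of agent~2 into three bundles forces some non-good bundle of agent~2 to avoid $B_{1,1}$ altogether, which is exactly the hypothesis that Proposition~\ref{pro:intersectGodBad} needs.
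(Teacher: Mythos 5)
Your proof is correct and takes essentially the same route as the paper: the paper's one-line argument is precisely your pigeonhole observation that a two-element $B_{1,1}$ can meet at most two of agent~2's three bundles, so if it meets the good bundle $B_{2,1}$ it must miss one of the bad bundles, and Proposition~\ref{pro:intersectGodBad} then yields an MMS allocation. Your preliminary appeal to Proposition~\ref{pro:containment} for the case $B_{1,1}\subseteq B_{2,1}$ is harmless but unnecessary, since in that case both bad bundles of agent~2 are disjoint from $B_{1,1}$ and the same conclusion follows directly.
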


\begin{proof}
Suppose that $|B_{1,1} = 2|$ and that $B_{2,1}$ intersects $B_{1,1}$. Then w.l.o.g., $B_{2,2}$ does not intersect $B_{1,1}$, and Proposition~\ref{pro:intersectGodBad} applies.
\end{proof}

\begin{proposition}
\label{pro:22}
If $|B_{1,1}| = |B_{2,1}| = 2$, then $I$ has an MMS allocation.
\end{proposition}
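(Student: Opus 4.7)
The plan is to derive from the two size-2 hypotheses that the three ``good'' bundles $B_{1,1}$, $B_{2,1}$, $B_{3,1}$ are pairwise disjoint, and then to produce an MMS allocation by handing each agent her own good bundle. Since each $B_{i,1}$ is good for every agent (per the normalization established just before Proposition~\ref{pro:intersectGodBad}), the construction should go through immediately once disjointness is in hand.

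First I would apply Corollary~\ref{cor:2disjoint} using $|B_{1,1}|=2$: either $I$ already has an MMS allocation (and we are done), or $B_{2,1}$ is disjoint from $B_{1,1}$. The written proof of the corollary only treats $B_{2,1}$ explicitly, but exactly the same reasoning, with the roles of agents~2 and~3 swapped, shows that $B_{3,1}$ is also disjoint from $B_{1,1}$. Next, by the symmetry that swaps agents~1 and~2, the same corollary applied to the hypothesis $|B_{2,1}|=2$ yields that $B_{1,1}$ and $B_{3,1}$ are each disjoint from $B_{2,1}$. Combining these, $B_{1,1}$, $B_{2,1}$, $B_{3,1}$ are pairwise disjoint.

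Finally, I would allocate $B_{1,1}$ to agent~1, $B_{2,1}$ to agent~2, and all remaining items (which contain $B_{3,1}$ in its entirety) to agent~3. Because $B_{i,1}$ lies in agent~$i$'s MMS partition, $v_i(B_{i,1}) \ge MMS_i$, and monotonicity/nonnegativity of additive valuations guarantees that agent~3's total bundle value is at least $v_3(B_{3,1}) \ge MMS_3$. This is an MMS allocation.

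The only real ``difficulty'' is bookkeeping: correctly invoking Corollary~\ref{cor:2disjoint} with the two necessary symmetries (between agents~2 and~3 to extend the corollary's ``other agent'' clause, and between agents~1 and~2 to apply the corollary a second time) to pin down all three pairwise-disjointness relations. Once disjointness is secured, the allocation itself requires no further work.
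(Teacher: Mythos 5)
Your proposal is correct and matches the paper's proof: the paper likewise invokes Corollary~\ref{cor:2disjoint} (applied symmetrically to both size-two hypotheses) to conclude that $B_{1,1}$, $B_{2,1}$, $B_{3,1}$ are pairwise disjoint, and then allocates each agent her own good bundle. Your extra bookkeeping about extending the corollary's ``other agent'' clause to agent~3 and about dumping the leftover items on agent~3 is just a more explicit rendering of the same argument.
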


\begin{proof}
By Corollary~\ref{cor:2disjoint},  $B_{1,1}$ and $B_{2,1}$ and $B_{3,1}$ can be assumed to be disjoint from each other. Hence we can give each agent $i$ the bundle $B_{i,1}$.
\end{proof}

Proposition~\ref{pro:22} implies among other things that every instance with three agents and up to eight items has an MMS allocation, a fact that was already proved in Lemma~\ref{lem:3}.

We are now ready to state Theorem~\ref{thm:structure}. However, before doing it, we introduce some notation and terminology. Recall that we may assume that instance $I$ is ordered. We still do so, but we no longer assume that the order is from item $e_1$ to item $e_9$.
Instead the order is according to some permutation $\pi$ that is left unspecified at this point. Instead, our naming convention for items is based on arranging the items in a three by three matrix, and naming the items according to their location in the matrix, as specified below.

$\left(
  \begin{array}{ccc}
    e_1 & e_2 & e_3 \\
    e_4 & e_5 & e_6 \\
    e_7 & e_8 & e_9 \\
  \end{array}
\right)$

The rows of the matrix are referred to as $R_1, R_2, R_3$, starting from the top row, and the columns are referred to as $C_1, C_2, C_3$, starting from the left column. The two main diagonals of the matrix {are} $\{e_1, e_5, e_9\}$ and $\{e_3, e_5, e_7\}$.

\begin{theorem}
\label{thm:structure}
A negative example for three players and nine items
must have the following structure (after appropriately renaming the items). For one agent $R$, the MMS partition is into the three rows ($R_1$, $R_2$ and $R_3$), for one agent $C$, the MMS partition is into the three columns ($C_1$, $C_2$ and $C_3$), and for one agent $U$, the MMS partition is to a bundle $P = \{e_2, e_4\}$ ($P$ stands for {\em pair}), a bundle $D$ that is one of the two main diagonals ($D$ stands for {\em diagonal}), and a bundle $Q$ with the remaining four items ($Q$ stands for {\em quadruple}). Bundle $P$ is good for all agents, whereas $D$ and $Q$ are bad for agents $R$ and $C$. The row and the column that do not intersect $P$ are good for all agents (these are $R_3$ and $C_3$), whereas the remaining rows and columns are good only for the agents that have them in their partition, and bad for the other agents.
\end{theorem}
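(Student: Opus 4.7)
The plan is to apply the structural propositions already established to narrow the possible shapes of a negative example, until only the claimed structure remains. First, Proposition~\ref{pro:size3} forces at least one agent's partition to contain a bundle of size at most~2, and the standing assumption (from Lemma~\ref{lem:reduce1}) that every bundle has size at least~2 makes this a size-2 bundle. Propositions~\ref{pro:uniqueGood} and~\ref{pro:2good} together imply that any size-2 bundle must be that agent's unique ``good'' bundle, and Proposition~\ref{pro:22} limits this to at most one agent. Hence exactly one agent $U$ has a size-2 good bundle $P$, and the remaining sizes in $U$'s partition must be $(3,4)$ (the alternative $(2,5)$ is excluded by the same ``at most one size-2 bundle per agent'' consequence of Propositions~\ref{pro:uniqueGood} and~\ref{pro:2good}); call these bundles $D$ (size 3) and $Q$ (size 4). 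The other two agents $R, C$ then each have three size-3 bundles.

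By Corollary~\ref{cor:size3}, each bundle of $R$ meets each bundle of $C$ in exactly one item, so the 9 items arrange into a $3\times 3$ matrix with $R$'s bundles as rows $R_1, R_2, R_3$ and $C$'s as columns $C_1, C_2, C_3$, labeled so that $R_3$ and $C_3$ are the good bundles. Corollary~\ref{cor:2disjoint} gives $P \cap R_3 = P \cap C_3 = \emptyset$, while Proposition~\ref{pro:intersectGodBad} forces $P$ to meet each of $R_1, R_2, C_1, C_2$; since $|P|=2$, this pins $P$ down to either $\{e_1,e_5\}$ or $\{e_2,e_4\}$, and swapping the names of $C_1$ and $C_2$ converts the first case into the second, so WLOG $P=\{e_2,e_4\}$. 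Similarly, Propositions~\ref{pro:containment} and~\ref{pro:symmetricDifference} applied to the size-3 bundle $D$ against each row and column give $|D\cap R_j|,|D\cap C_j|\in\{0,1\}$, so $D$ must be a transversal of the matrix; combined with $D\cap P=\emptyset$, only three candidates remain: the two main diagonals $\{e_1,e_5,e_9\}$, $\{e_3,e_5,e_7\}$, and the ``cyclic'' transversal $\{e_1,e_6,e_8\}$.

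The main obstacle is ruling out the cyclic transversal, since the preceding structural propositions do not eliminate it directly. We handle it by a symmetry argument: the simultaneous swap of rows 1,~2 and columns 1,~2 fixes $P$, $R_3$, and $C_3$ setwise (along with their roles as good bundles) while mapping $\{e_1,e_6,e_8\}$ to $\{e_3,e_5,e_7\}$. So after this relabeling, $D$ is a main diagonal and $Q$ consists of the four remaining items. The claimed goodness/badness labels are then bookkeeping from the setup of Proposition~\ref{pro:uniqueGood}: $P$, $R_3$, $C_3$ are each their owner's good bundle (hence good for the other two agents by Proposition~\ref{pro:uniqueGood}, and trivially good for the owner), while $D$ and $Q$ are bad for $R$ and $C$ by Proposition~\ref{pro:uniqueGood}, and each of $R_1, R_2, C_1, C_2$ is bad for exactly the two agents that do not have it in their own partition.
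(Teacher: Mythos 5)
Your proposal is correct and follows essentially the same route as the paper's proof: Proposition~\ref{pro:size3} together with Propositions~\ref{pro:2good} and~\ref{pro:22} to isolate the unique agent $U$ with a $(2,3,4)$ partition, Corollary~\ref{cor:size3} to realize $R$ and $C$ as rows and columns, Proposition~\ref{pro:intersectGodBad} (plus the containment/symmetric-difference propositions) to pin down $P$ and force $D$ to be a transversal disjoint from $P$, and the simultaneous row-$1$/$2$ and column-$1$/$2$ swap to fold the transversal $\{e_1,e_6,e_8\}$ into a main diagonal. The only deviations are cosmetic (normalizing $P$ by a column swap rather than a row swap, and spelling out the good/bad bookkeeping), so nothing further is needed.
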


As there are two main diagonals, Theorem~\ref{thm:structure} offers two possible structures. We refer to them as the parallel diagonals structure (bundle $d$ runs in parallel to bundle $p$), and the crossing diagonals structure (bundle $d$ crosses bundle $p$). They are depicted in figures~\ref{fig:9Elements3PartitionsB} and~\ref{fig:9Elements3Partitions}, respectively. Within each figure, for every item $e_i$, the entry $r_i$ ($c_i$, $u_i$, respectively) denotes its value to agent $R$ ($C$, $U$, respectively).

\begin{figure*}[!h]
  \centering
    \includegraphics[width=4cm, scale=0.5]{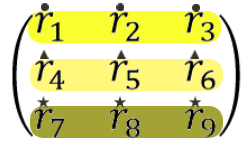}
    \includegraphics[width=4cm, scale=0.5]{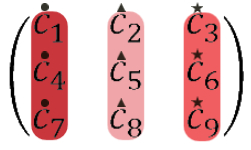}
    \includegraphics[width=4cm, scale=0.5]{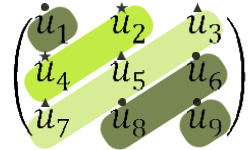}
  \caption{The parallel diagonals structure, with MMS partitions for players $R$, $C$ and $U$ respectively. The good bundles, marked by a $\star $ above the item, are $R_3 = \{e_7, e_8, e_9\}$, $C_3 = \{e_3, e_6, e_9\}$ and $P = \{e_2, e_4\}$.}
  \label{fig:9Elements3PartitionsB}
\end{figure*}

\begin{figure*}[!h]
  \centering
    \includegraphics[width=4cm, scale=0.5]{9Elements_ByRow_ver2.jpg}
    \includegraphics[width=4cm, scale=0.5]{9Elements_ByCol_ver2.jpg}
    \includegraphics[width=4cm, scale=0.5]{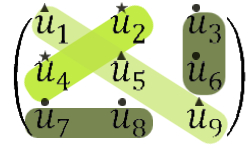}
  \caption{The crossing diagonals structure, with MMS partitions for players $R$, $C$ and $U$ respectively. The good bundles, marked by a $\star $ above the item, are $R_3 = \{e_7, e_8, e_9\}$, $C_3 = \{e_3, e_6, e_9\}$ and $P = \{e_2, e_4\}$. }
  \label{fig:9Elements3Partitions}
\end{figure*}


We now prove Theorem~\ref{thm:structure}.

\begin{proof}
By Proposition~\ref{pro:size3}, for at least one agent her MMS partition contains a bundle of size two. Propositions~\ref{pro:2good} and~\ref{pro:22} imply that at most one  agent has a bundle of size two in her MMS partition. It  follows that there are exactly two agents ($R$ and $C$) for which the MMS partition is composed of bundles of size three, and one agent ($U$) whose partition has a bundle of size two. That partition cannot have two bundles of size two (see remark after Proposition~\ref{pro:2good}), and hence the other two bundles in that partition are of sizes three and four.

By Corollary~\ref{cor:size3}, the intersection of every bundle of $R$ and every bundle of $C$ is of size~1. Hence up to permuting the names of the items, the MMS partition of $R$ can be assumed to be the row bundles, and the MMS partition of $C$ can be assumed to be the column bundles. Each of {the agents} $R$ and $C$ has exactly one bundle that is good for all agents, and without loss of generality these bundles are $R_3$ and $C_3$.

Agent $U$ has a bundle of size two. By Proposition~\ref{pro:2good}, this bundle is good for the other two agents. By Proposition~\ref{pro:intersectGodBad}, it intersects only bad bundles of other agents. By Proposition~\ref{pro:containment} it is not contained in a bundle of any other agent. Hence without loss of generality, this bundle is $\{e_2,e_4\}$, which we denoted by $P$. (The other alternative would have been items $\{e_1, e_5\}$, but it can be converted to $P$ by switching the order among the first two rows.)

As noted above, agent $U$ has a bundle (call it $D$) with three items. It follows from Propositions~\ref{pro:containment} and~\ref{pro:symmetricDifference} that $D$ intersects every bundle of every other agent exactly once. This gives three options. Two of them are the main diagonals, giving the parallel diagonals and the crossing diagonals structures allowed by Theorem~\ref{thm:structure}. The remaining option is $\{e_1, e_6, e_8\}$. However, by renaming the items (permuting rows $R_1$ and $R_2$ and permuting columns $C_1$ and $C_2$)
we see that this is the same structure as the parallel diagonals one.
\end{proof}

\subsection{The LP approach}

Theorem~\ref{thm:structure} establishes that a negative example with three agents and nine items, if one exists, must have one of two possible structures, either the one that we referred to as {\em parallel diagonals}, or the one that we referred to as {\em crossing diagonals}. Indeed, the negative example presented in the proof of Theorem~\ref{thm:largeGap} has the parallel diagonals structure. In this section we explain how each of the structures can guide us in a systematic computer assisted search for an actual negative example.

{Let us first introduce terminology that will be used in our proof. Among all negative examples, we wish to find the one with largest MMS gap. We refer to such a negative example as a {\em max-gap instance}. Equivalently, a max-gap instance is one of smallest value $b$, such that there is an allocation instance in which no allocation gives every agent more than a {$\frac{b-1}{b}$} fraction of her MMS. By scaling valuation functions of agents, this is equivalent to the following definition.

\begin{definition}
\label{def:maxGap}
A {\em max gap instance} is an allocation instance with the smallest value of $b$, in which the MMS of every agent is $b$, but in every allocation, at least one agent gets a value of at most $b-1$. A {\em max-gap PD-instance} ({\em max-gap CD-instance}, respectively) is a max gap instance among those instances that have the parallel diagonals structure (crossing diagonals, respectively).
\end{definition}

By Theorem~\ref{thm:structure}, a max gap instance is either a max-gap PD-instance or a max-gap CD-instance. We will explain in this section how to bound $b$ in max-gap PD-instances. The same principles apply equally well to bound $b$ in max-gap CD-instances.}


The following theorem strengthens Theorem~\ref{thm:structure} to account for the fact that now we are not only interested in a negative example, but rather in a negative example of a max-gap type. In this theorem, we identify $R$, $C$ and $U$ from the parallel diagonals structure with agents~1,~2 and~3, respectively. $b$ denotes the value of the MMS.

\begin{theorem}
\label{thm:max-gap}
Every max-gap PD-instance has the parallel diagonals structure (by definition). Moreover, all properties below either necessarily hold, or can be assumed to hold without loss of generality.

\begin{enumerate}
    \item For every agent $i$, for every bundle $B$ that is in her MMS partition according to the parallel diagonals structure, it holds that $v_i(B) = b$.
    \item For every allocation $A = (A_1, A_2, A_3)$, at least one of the three inequalities $v_i(A_i) \le b-1$ holds (where $1 \le i \le 3$).
    \item For every agent $i$ and every bundle $B$ that belongs to the MMS partition of a different agent and $B$ is bad for $i$, it holds that $v_i(B)\le b-1$.
    \item For every item $e$ and agent~$i$, it holds that $v_i(e) \ge 1$.
    \item {With only four possible exceptions, for every two items $e$ and $e'$ and every agent $i$, if the two items are not in the same bundle in the $MMS_i$ partition, then $|v_i(e) - v_i(e')| \ge 1$. There are two exceptions, $(e_6,e_9)$ and $(e_8,e_9)$, and two partial exceptions, the cases $v_R(e_8) \ge v_R(e_2)$ and $v_C(e_6) \ge v_C(e_4)$.}
    \item The instance is ordered. Namely, there is some permutation $\pi$ over nine items. For every two items $e$ and $e'$, if $e_i \ge_{\pi} e_j$ (this notation denotes that $e$ precedes $e'$ according to $\pi$), then for every agent $i$ it holds that $v_i(e) \ge v_i(e')$.
\end{enumerate}

\end{theorem}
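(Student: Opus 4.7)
The plan is to establish the six properties of Theorem~\ref{thm:max-gap} in sequence, each being either an immediate consequence of the definition of a max-gap PD-instance or a without-loss-of-generality normalization that preserves both the parallel diagonals structure and the value of~$b$. Property~2 is simply the definition of a max-gap instance. Property~6 follows from Proposition~\ref{pro:BL}: ordering the items can only improve allocations, so the ordered version is also a negative example whose gap is at least $1/b$, and by max-gap it is exactly $1/b$; we then relabel items so the ordering matches the permutation~$\pi$. Property~1 is a decrement argument: if $v_i(B) > b$ for some bundle~$B$ in agent~$i$'s MMS partition, decreasing $v_i$ on items of~$B$ until $v_i(B) = b$ only reduces agent~$i$'s utility for any allocation, so the instance remains a negative example; monotonicity of $\MMSi$ in the valuation function keeps $\MMSi = b$.

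Property~3 is the substantive combinatorial claim. I would prove it by contradiction: suppose $B$ lies in the MMS partition of some agent~$j \neq i$, is bad for~$i$, and $v_i(B) > b - 1$. By Theorem~\ref{thm:structure}, each MMS partition contains a universally good bundle ($R_3$, $C_3$, or~$P$). For each of the twelve relevant pairs $(i,B)$, the six items outside~$B$ decompose into the universally good bundle of agent~$j$'s partition together with one other bundle of $j$'s MMS partition. Allocating $B$ to~$i$, the universally good bundle to the third agent (who values it at least~$b$, since it is universally good), and the remaining bundle to~$j$ (who values it exactly~$b$ by property~1) produces an allocation in which every agent receives strictly more than $b-1$, contradicting property~2. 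The verification for all twelve cases follows a uniform pattern.

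Properties~4 and~5 are obtained by scaling and perturbation arguments. After rescaling so that $b$ is large, if $v_i(e) < 1$ one can perturb $v_i(e)$ upward while compensating within the $\MMSi$-bundle containing~$e$ so that the bundle sum stays at~$b$; for a sufficiently small perturbation, the strict inequalities that witness the negative example property (property~2) are preserved. A similar perturbation within an agent's MMS partition establishes the unit-gap condition of property~5 for most pairs of items in different MMS bundles. The main obstacle, and the only place where genuine case analysis is required, is to confirm that the four listed exceptions are the only obstructions: this involves enumerating pairs of items across MMS bundles for each of the three agents and verifying, using the rigidity of the parallel diagonals structure together with the ordering from property~6, that any perturbation attempting to enforce a unit gap for these specific pairs would conflict with the structure established in Theorem~\ref{thm:structure}.
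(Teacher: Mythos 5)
Your treatment of properties 1, 2 and 6 matches the paper's. For property 3 you take a genuinely different and valid route: instead of the paper's perturbation argument (raise one item's value so that $B$ becomes good for $i$, invoke Proposition~\ref{pro:uniqueGood} to get an MMS allocation of the modified instance, and translate back), you exhibit the explicit allocation $(B \to i,\; G_j \to k,\; B'' \to j)$, where $G_j$ is the universally good bundle of $j$'s partition and $B''$ is its third bundle; every agent then gets strictly more than $b-1$, contradicting property~2 directly. This is cleaner and uses only the structure already fixed after Proposition~\ref{pro:uniqueGood}, so it is a legitimate improvement for that item.

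For properties 4 and 5, however, there is a genuine gap. Your mechanism --- perturb $v_i(e)$ upward, compensate inside the same $MMS_i$ bundle, and argue that ``for a sufficiently small perturbation the strict inequalities witnessing property~2 are preserved'' --- does not work. First, the required perturbation is $1 - v_i(e)$, which need not be small. Second, and more importantly, the witnessing inequalities are the non-strict $v_i(A_i) \le b-1$, and in a max-gap instance many of them are tight (indeed property~3 is an equality at the optimum of the LP), so any upward change to $v_i$ can convert a witness $= b-1$ into $> b-1$ and destroy the negative-example property of the modified instance; nothing in your argument rules this out. The paper's actual mechanism is different in kind: it modifies $v_i$ so that agent $i$ acquires a \emph{new} $MMS_i$ partition whose combinatorics (bundle sizes for property~4, after moving the near-zero item to another bundle; intersection pattern with the other agents' partitions for property~5, after swapping the values of $e$ and $e'$) are incompatible with Theorem~\ref{thm:structure}. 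That incompatibility forces the modified instance to admit an MMS allocation, and since the modification changes the value of any bundle by less than $1$, that allocation gives agent $i$ more than $b-1$ under the original $v_i$, contradicting max-gap. This structural step is the whole content of properties 4 and 5 and is absent from your proposal. Relatedly, your description of the exceptions in property~5 has the logic inverted: the four exceptional pairs are exactly those for which the value swap \emph{does not} conflict with Theorem~\ref{thm:structure} (it yields another admissible structure), which is why no contradiction is available for them; for all other pairs the swap does break the structure, and that is what proves the unit gap.
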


\begin{proof}
We prove the properties of the theorem in the same order as they are stated in the theorem. For each property, the notation in the proof corresponds to the notation in the theorem.

\begin{enumerate}
    \item The fact that $b$ is the MMS implies $v_i(B) \ge b$. We may further assume without loss of generality that $v_i(B) = b$, because reducing the value of an item (while still keeping the MMS  value to be at least $b$), there still is no allocation that gives every agent a value larger than $b-1$.

    \item This is because in a max-gap PD-instance there is no allocation that gives every agent value strictly larger than $b-1$.

    \item Suppose for the sake of contradiction that $v_i(B) = b-1+\delta$ for $\delta > 0$ (and note that $\delta < 1$ by Proposition~\ref{pro:uniqueGood}). Let $e$ be an arbitrary item in $B$. Modify the valuation function $v_i$ to $v'_i$, where the only change is that $v'_i(e) = v_i(e) + 1 - \delta$. Hence now $v'_i(B) = b$, and Proposition~\ref{pro:uniqueGood} implies that an MMS allocation exists (with respect to $v'_i$). In this allocation, every agent other than $i$ gets at least her MMS, whereas agent $i$ gets at least $b - 1 + \delta > b-1$ (with respect to $v_i$). Hence the instance is not a max-gap instance.

    \item Suppose for the sake of contradiction that $v_i(e) = 1 - \delta$ for $\delta > 0$ (and note that $\delta \le 1$ as item values are non-negative). Let $e' \not= e$ be an arbitrary item such that both $e$ and $e'$ belong to the same bundle $B$ in the $MMS_i$ partition. Modify the valuation function $v_i$ to $v'_i$, where the only {changes are} that $v'_i(e') = v_i(e') + 1 - \delta$ {and $v'_i(e) = 0$}. Hence now $v'_i(B \setminus \{e\}) = b$. This allows us to move $e$ to a different bundle of the $MMS_i$ partition. After this move, we {still have an $MMS_i$ partition with respect to $v'$, but the structure is} neither the parallel diagonals structure nor the crossing diagonals structure (as sizes of bundles no longer obey these structures). Hence an MMS allocation exists. In this allocation, every agent other than $i$ gets at least her MMS, whereas agent $i$ gets at least $b - 1 + \delta > b-1$ (with respect to $v_i$). Hence the instance is not a max-gap instance.

    \item Let $e$ and $e'$ be two items different than the pairs $(e_6,e_9)$ and $(e_8,e_9)$. Suppose for the sake of contradiction that $v_i(e) = v_i(e') + 1 - \delta$ for $0 < \delta \le 1$, where $e$ and $e'$ are not in the same bundle in the $MMS_i$ partition.  Modify the valuation function $v_i$ to $v'_i$, where the only change is that the values of $e$ and $e'$ are swapped. Namely, $v'_i(e') = v_i(e)$ and $v'_i(e) = v_i(e')$. Now swapping $e$ and $e'$ in the bundles of the $MMS_i$ partition (recall that $e$ and $e'$ are in different bundles), we get a new $MMS_i$ partition, that we refer to as $MMS'_i$.  After this swap, we no longer have the structure required by Theorem~\ref{thm:structure}. This can be shown via a case analysis. We sketch the case analysis for the case that $i=R$, and leave the cases $i=C$ and $i=U$ to the reader.
    If $i = R$, then $e$ and $e'$ are in different rows. If they are also in different columns, then after the modification, {there is a bundle in $MMS'_R$ and a bundle in $MMS_C$ that do not intersect each other}, contradicting Theorem~\ref{thm:structure}. If they are in the same column, and either $e$ or $e'$ belong to $D$, then after the modification, a bundle in $MMS'_R$ fails to intersect $D$. Hence neither $e$ nor $e'$ are in $D$. Conditioned on this and on being in the same column, there are four possibilities. If $e$ and $e'$ are in $C_1$, then they must be $e_1$ and $e_4$. After the modification, $P$ is contained in a single bundle of the $MMS'_R$ partition. If they are in $C_2$ and $e_2$ is at least as valuable as $e_8$, then the bundle $(e_7,e_2,e_9)$ is in the $MMS'_R$ partition, is good for all three agents, and intersects $P$ (which is not allowed, by Theorem~\ref{thm:structure}). If they are in $C_2$ and $e_8$ is at least as valuable as $e_2$, then this is the partial exception concerning $v_R(e_8) \ge v_R(e_2)$. If they are in $C_3$ then they are $e_6$ and $e_9$, and this is the exception concerning $(e_6,e_9)$.


    In each of the cases analyzed above, the structure that results after replacing $MMS_R$ by $MMS'_R$ is not consistent with Theorem~\ref{thm:structure}.
    Hence an MMS allocation exists. In this allocation, every agent other than $R$ gets at least her MMS, whereas agent $R$ gets at least $b - 1 + \delta > b-1$ (with respect to $v_R$). Hence the instance is not a max-gap instance.

    \item Being an ordered instance can be assumed without loss of generality, by Proposition~\ref{pro:BL}.
\end{enumerate}

\end{proof}

Armed with Theorem~\ref{thm:max-gap}, we set up a linear program that searches for a max-gap PD-instance. For each agent $i$, there are nine variables specifying the values of $v_i(e_1), \ldots, v_i(e_9)$. For convenience, we rename these variables to be $\{r_1, \ldots, r_9\}$ for agent $R$, $\{c_1, \ldots, c_9\}$ for agent $C$, and $\{u_1, \ldots, u_9\}$ for agent $U$ (as in Figure~\ref{fig:9Elements3PartitionsB}). In addition, we have the variable $b$. Hence altogether there are 28 variables. The objective function of the LP is to minimize $b$.

Item~4 of Theorem~\ref{thm:max-gap} gives 27 linear constraints, that we refer to as the {\em positivity constraints}. That is, for every $1 \le j \le 9$ we have the constraints:

\begin{eqnarray*}\label{eqn:positivity}
\begin{array}{lcl}
     r_j &\ge  &1 \\
     c_j &\ge  &1 \\
     u_j &\ge  &1
 \end{array}
 \end{eqnarray*}

Item~1 of Theorem~\ref{thm:max-gap} gives 9 linear constraints, that we refer to as the {\em MMS constraints}.

\begin{eqnarray*}\label{eqn:MMSconstraints}
\begin{array}{lcl}
     r_1 + r_2 + r_3 &= &b \\
   r_4+r_5+r_6 &= &b \\
   r_7+r_8+r_9 &= &b \\
  c_1 + c_4 + c_7 &= &b \\
   c_2 + c_5+ c_8 &= &b \\
   c_3+c_6+c_9 &= &b \\
   u_2 + u_4 &= &b \\
   u_3+u_5+u_7 &= &b \\
   u_1+u_6+u_8+u_9 &= &b
 \end{array}
 \end{eqnarray*}

Item~3 of Theorem~\ref{thm:max-gap} gives 12 linear constraints, that we refer to as the {\em bad bundles constraints}.

\begin{eqnarray*}\label{eqn:badBundles}
\begin{array}{lcl}
     c_1 + c_2 + c_3 &\le &b-1 \\
          u_1 + u_2 + u_3 &\le &b-1 \\
   c_4+c_5+c_6 &\le &b-1 \\
    u_4+u_5+u_6 &\le &b-1 \\
  r_1 + r_4 + r_7 &\le &b-1 \\
  u_1 + u_4 + u_7 &\le &b-1 \\
   r_2 + r_5+ r_8 &\le &b-1 \\
    u_2 + u_5+ u_8 &\le &b-1 \\
   r_3+r_5+r_7 &\le &b-1 \\
    c_3+c_5+c_7 &\le &b-1 \\
   r_1+r_6+r_8+r_9 &\le &b-1 \\
    c_1+c_6+c_8+c_9 &\le &b-1
 \end{array}
 \end{eqnarray*}

Minimizing $b$ subject to the above $27 + 9 + 12 = 48$ constraints, the optimal solution (found by an LP solver) has value $b=13$. This certifies that in every max-gap PD-instance, every agent receives at least a $\frac{12}{13}$ fraction of her MMS. This is still far from our goal of $\frac{39}{40}$, that would match the bound in Theorem~\ref{thm:largeGap}. Indeed, the solution to the LP is not a max-gap PS-instance, and not even a negative example: it satisfies all constraints of the LP, but does have an MMS allocation.

To make further progress, we need to generate additional valid constraints to the LP. To be useful, these constraints should not be implied by the existing constraints. We present two procedures for generating useful constraints.

One procedure generates constraints that we refer to as {\em order constraints}.
Recall item~6 of Theorem~\ref{thm:max-gap}, which imposes an order $\pi$ over the items. The following proposition uses it in order to derive additional constraints.

\begin{proposition}
\label{pro:orderConstraints}
We may assume without loss of generality that $e_4 \ge_{\pi} e_2$, $e_3 \ge_{\pi} e_7$ and  $e_8 \ge_{\pi} e_6$.
\end{proposition}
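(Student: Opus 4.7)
The plan is to combine a single application of the transpose symmetry of the parallel diagonals structure with two contradiction arguments based on the bad-bundle constraints from Theorem~\ref{thm:max-gap}. The three item pairs $(e_2, e_4)$, $(e_3, e_7)$, $(e_6, e_8)$ are all flipped together by the only nontrivial symmetry of the structure, so pure symmetry cannot pin down more than one inequality on its own; the remaining two will be forced by the LP's value constraints.

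First I would verify that the matrix transpose $\tau:(i,j)\mapsto(j,i)$, combined with swapping the agent labels $R$ and $C$, is a symmetry of the parallel diagonals structure: it fixes $e_1,e_5,e_9$, swaps each of the three pairs above, interchanges the row and column partitions, and preserves $P$, $D$, and $Q$ as sets. The max-gap property is invariant under $\tau$, so by applying $\tau$ if necessary I may assume without loss of generality that $e_4 \ge_\pi e_2$.

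Next I would derive $e_3 \ge_\pi e_7$ by contradiction. If $e_7 \ge_\pi e_3$, then since the instance is ordered, both $v_R(e_4) \ge v_R(e_2)$ and $v_R(e_7) \ge v_R(e_3)$ hold. Summing these and adding $v_R(e_1)$ on both sides yields
\[
v_R(C_1) \;=\; v_R(e_1) + v_R(e_4) + v_R(e_7) \;\ge\; v_R(e_1) + v_R(e_2) + v_R(e_3) \;=\; v_R(R_1) \;=\; b,
\]
which contradicts the bad-bundle bound $v_R(C_1) \le b-1$ from item~3 of Theorem~\ref{thm:max-gap}. The third inequality $e_8 \ge_\pi e_6$ follows from the parallel argument applied to agent $C$: assuming $e_6 \ge_\pi e_8$, the inequalities $v_C(e_4) \ge v_C(e_2)$ and $v_C(e_6) \ge v_C(e_8)$ together with $v_C(e_5)$ added on both sides give $v_C(R_2) \ge v_C(C_2) = b$, contradicting $v_C(R_2) \le b-1$.

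The main obstacle is recognizing that transpose symmetry by itself cannot establish all three inequalities, since $\tau$ couples them, and identifying the correct bad-bundle/good-bundle pair to exploit for each of the remaining two comparisons. Once one pairs the ``crossed'' column $C_1$ (bad for $R$) with the aligned row $R_1$ (good for $R$) for the $(e_3, e_7)$ comparison, and the ``crossed'' row $R_2$ (bad for $C$) with the aligned column $C_2$ (good for $C$) for the $(e_6, e_8)$ comparison, each contradiction collapses to a one-line arithmetic inequality using only additivity and the ordering.
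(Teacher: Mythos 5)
Your proof is correct and follows essentially the same route as the paper: the transpose symmetry (swapping $R$ and $C$) gives $e_4 \ge_{\pi} e_2$ without loss of generality, and the comparisons of the good bundle $R_1$ against the bad bundle $C_1$ for agent $R$, and of $C_2$ against $R_2$ for agent $C$, force the other two order relations. The paper phrases these last two steps as direct derivations ($r_3 \ge r_7 + 1$, $c_8 \ge c_6 + 1$) rather than by contradiction, but the underlying inequalities are identical.
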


\begin{proof}
By symmetry and item~6 of Theorem~\ref{thm:max-gap} (which imposes an order $\pi$ over the items), we may assume that $e_4 \ge_{\pi} e_2$. (If $e_2 \ge_{\pi} e_4$, then we may transpose all matrices and interchange the roles of $R$ and $C$, and get an equivalent negative instance that does satisfy $e_4 \ge_{\pi} e_2$.) By item~5 of Theorem~\ref{thm:max-gap}, this gives the following three constraints: $r_4 \ge r_2 + 1$, $c_4 \ge c_2 + 1$ and $u_4 \ge u_2$ (observe that $e_2$ and $e_4$ are in the same bundle in the $MMS_U$ partition).

The fact that $r_1 + r_2 + r_3 = v_R(R_1) = b > b-1 \ge v_R(C_1) = r_1 + r_4 + r_7$ then implies that $r_3 \ge r_7 + 1$.  As the instance can be assumed to be ordered, we have $e_3 \ge_{\pi} e_7$, which implies the two constraints $c_3 \ge c_7 + 1$ and $u_3 \ge u_7$  (also here, $e_3$ and $e_7$ are in the same bundle in the $MMS_U$ partition). In a similar manner, the fact that  $v_C(C_2) = b > b-1 \ge v_C(R_2)$ implies that $e_8 \ge_{\pi} e_6$, and consequently that $r_8 \ge r_6 + 1$ and $u_8 \ge u_6$.
\end{proof}

By Proposition~\ref{pro:orderConstraints} and its proof, we add {nine} constraints to the LP. {(Two of the constraints, $r_3 \ge r_7 +1$ and $c_8 \ge c_6 + 1$, are not really needed, as they are implied by constraints already present in the LP.)} We refer to these constraints as {\em order constraints}. Observe that some of the power comes from the fact that an order constraint such as $e_4 \ge_{\pi} e_2$ does not simply translate to $r_4 \ge r_2$, but rather to the stronger $r_4 \ge r_2 + 1$.

\begin{eqnarray*}\label{eqn:order}
\begin{array}{lcl}
     r_4 &\ge &r_2 + 1\\
          c_4 &\ge &c_2 + 1\\
               u_4 &\ge &u_2 \\
               r_3 &\ge &r_7 + 1 \\
                    c_3 &\ge &c_7 + 1 \\
                         u_3 &\ge &u_7 \\
                              r_8 &\ge &r_6 + 1\\
                              c_8 &\ge &c_6 + 1\\
                                   u_8 &\ge &u_6
 \end{array}
 \end{eqnarray*}

With the above order constraints, the value of the LP increases to $b = 16$.
There are plenty of other order constraints that can be inferred (see Appendix~\ref{app:implement}), but those that we could infer did not turn out to be useful.

Another procedure generates constraints that we refer to as {\em allocation constraints}. The following proposition introduces three such constraints.

\begin{proposition}
\label{pro:allocationConstraints}
The following three constraints are valid constraints.
\begin{enumerate}
    \item $u_6 + u_7 + u_9 \le b-1$.
    \item $u_3 + u_9 \le b-1$.
    \item $r_3 + r_9 \le b-1$.
\end{enumerate}
\end{proposition}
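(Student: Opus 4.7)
The plan is to derive each of the three inequalities by exhibiting an explicit allocation $(A_R, A_C, A_U)$ of the nine items and showing that two of the three agents necessarily receive value strictly greater than $b-1$. Item~2 of Theorem~\ref{thm:max-gap} then forces the remaining agent's value to be at most $b-1$, and that value is by design precisely the left-hand side of the claimed inequality. The only non-routine step is writing down the right allocation; once that is done the verification is a short calculation using only MMS equations, bad-bundle inequalities, order inequalities, and positivity constraints that are already part of the LP.

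For inequality~1, take $A_R = R_1 = \{e_1,e_2,e_3\}$, $A_C = \{e_4,e_5,e_8\}$, and $A_U = \{e_6,e_7,e_9\}$. Then $v_R(A_R) = b$, and
\[ v_C(A_C) = (c_2 + c_5 + c_8) + (c_4 - c_2) \ge b + 1 \]
by the MMS equation for $C_2$ and the order constraint $c_4 \ge c_2 + 1$. So the third agent must absorb the slack, i.e.\ $u_6 + u_7 + u_9 \le b-1$.

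For inequality~2, take $A_R = R_2 = \{e_4,e_5,e_6\}$, $A_C = \{e_1,e_2,e_7,e_8\}$, and $A_U = \{e_3,e_9\}$. Then $v_R(A_R) = b$ and
\[ v_C(A_C) = (c_1 + c_4 + c_7) + (c_2 + c_5 + c_8) - (c_4 + c_5) = 2b - (c_4 + c_5). \]
The bad-bundle constraint $c_4 + c_5 + c_6 \le b-1$ together with positivity $c_6 \ge 1$ gives $c_4 + c_5 \le b-2$, whence $v_C(A_C) \ge b+2$. Hence $u_3 + u_9 \le b-1$.

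For inequality~3, take $A_R = \{e_3,e_9\}$, $A_C = C_1 = \{e_1,e_4,e_7\}$, and $A_U = \{e_2,e_5,e_6,e_8\}$. Then $v_C(A_C) = b$, and using $\sum_j u_j = 3b$ together with the bad-bundle constraints $u_1 + u_4 + u_7 \le b-1$ and $u_3 + u_6 + u_9 \le b-1$ (with $u_6 \ge 1$) we get
\[ v_U(A_U) = 3b - (u_1+u_4+u_7) - (u_3+u_9) \ge 3b - (b-1) - (b-2) = b+3. \]
Thus $r_3 + r_9 \le b-1$, completing the proof.
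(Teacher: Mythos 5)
Your overall strategy coincides with the paper's: for each inequality you exhibit an explicit allocation, show that two of the three agents receive strictly more than $b-1$, and invoke item~2 of Theorem~\ref{thm:max-gap} to force the remaining agent's bundle value down to $b-1$. Parts~1 and~2 are correct. (For part~1 the paper gives $\{e_1,e_3,e_4\}$ to $R$ and $\{e_2,e_5,e_8\}$ to $C$, whereas you give $R_1$ to $R$ and $\{e_4,e_5,e_8\}$ to $C$; both variants exploit the order constraint $c_4 \ge c_2+1$ in the same way and are equally valid. Part~2 matches the paper's allocation and calculation essentially verbatim.)

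Part~3, however, contains a genuine error: you cite ``the bad-bundle constraint $u_3+u_6+u_9 \le b-1$,'' but no such constraint exists among the twelve bad-bundle constraints, and the inequality is in fact false. The bundle $\{e_3,e_6,e_9\}$ is $C_3$, which by Theorem~\ref{thm:structure} is good for \emph{all} agents, so $u_3+u_6+u_9 \ge b$. Hence your intermediate bound $u_3+u_9 \le b-2$ is unjustified. The damage is easily repaired: part~2 of the very proposition you are proving gives $u_3+u_9 \le b-1$ directly, and then
\[
v_U(A_U) = 3b - (u_1+u_4+u_7) - (u_3+u_9) \ge 3b-(b-1)-(b-1) = b+2 > b-1,
\]
which together with $v_C(C_1)=b$ still forces $r_3+r_9 \le b-1$. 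With that substitution your argument becomes a direct rephrasing of the paper's proof of the third constraint, which assumes $r_3+r_9>b-1$ for contradiction and constructs the same allocation $\{e_3,e_9\},\, C_1,\, \{e_2,e_5,e_6,e_8\}$.
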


\begin{proof}
Consider the allocation $(e_1,e_3,e_4), (e_2,e_5,e_8), (e_6,e_7,e_9)$ to agents $R$, $C$ and $U$ respectively. Observe that $v_R(e_1,e_3,e_4) = r_1 + r_3 + r_4 > r_1 + r_3 + r_2 = b$ and that $v_C(e_2,e_5,e_8) = c_2 + c_5 + c_8 = b$. Hence to be a max-gap PD-instance, $v_U(e_6,e_7,e_9) \le b-1$ must hold, implying the first constraint of the proposition.

Consider the allocation $(e_4,e_5,e_6), (e_1,e_2,e_7,e_8), (e_3,e_9)$.  Observe that $v_R(e_4,e_5,e_6) = r_4 + r_5 + r_6 = b$. We also have $v_C(e_1,e_2,e_7,e_8) =  c_1 + c_2 + c_7 + c_8 > b$. This holds because $c_1 + c_7 > c_5 + c_6$ (because $v_c(C_1) = b > b-1 = v_C(R_2)$), and $c_2 + c_5 + c_8 = b$. Hence to be a max-gap PD-instance, $v_U(e_3,e_9) \le b-1$ must hold, implying the second constraint of the proposition.

We now prove the third constraint of the proposition. Assume for the sake of contradiction that $r_3 + r_9 > b-1$. Having proved that $v_U(e_3,e_9) \le b-1$, we have that $u_1 + u_2 + u_4 + u_5 + u_6 + u_7 + u_8 > 2b$. Hence either $u_1 + u_4 +  u_7 > b$  or $u_2 + u_5 + u_6 + u_8 > b$. Whichever inequality holds (pick the first one if they both hold), give the respective items to $U$ (who gets more than her MMS), items $(e_3,e_9)$ to $R$ (who gets more than $b-1$), and the remaining items to $C$ (who gets at least her MMS, as she gets a complete column). This allocation has a gap strictly smaller than~1, contradicting the assumption that the instance is a max-gap PD-instance.
\end{proof}

With the three allocation constraints of Proposition~\ref{pro:allocationConstraints}, the value of the LP increases to $b = 19$.

We refer to the above LP {(containing $60$ constraints)} as the {\em root LP}. We are not aware of additional useful linear constraints that hold without loss of generality.
{Hence at this point, we revert to mixed integer programming.

\subsection{Using mixed integer programming}

We use the root LP described above as the basis of a mixed integer program (MIP) that finds the negative example with largest MMS gap, among those that have the parallel diagonals structure. (A similar MIP handles the crossing diagonals structure.) This is done by introducing {\em selection variables} that take on only $0/1$ values. Using these selection variables, we add two sets of constraints, referred to as {\em selected order constraints} and {\em selected allocation constraints}.

The goal of the selected order constraints is to make sure that the MIP selects a consistent ordering $\pi$ among the items. As we do not know which ordering to select, the selection variables allow the MIP to investigate all possible orderings (that are consistent with the order constraints that we already have), and choose among them the worst one. To illustrate how this is done, consider the two items $e_4$ and $e_9$. We wish to allow the MIP to investigate both the possibility that $e_4 \ge_{\pi} e_9$ and that $e_9 \ge_{\pi} e_4$. We introduce a selection variable $s_{49}$ with the integer (binary) constraint $s_{49} \in \{0,1\}$. We also select a sufficiently large constant, larger than the maximum possible difference in value between any two items. Choosing this constant as~40 suffices for our purpose. The MIP then contains the following six linear constraints (the $+1$ terms are from item~5 of Theorem~\ref{thm:max-gap}):

\begin{itemize}
    \item  $r_4 \ge r_9 + 1 - 40 + 40s_{49}$.
    \item  $c_4 \ge c_9 + 1 - 40 + 40s_{49}$.
    \item  $u_4 \ge u_9 + 1 - 40 + 40s_{49}$.
    \item  $r_9 \ge r_4 + 1 - 40s_{49}$.
    \item  $c_9 \ge c_4 + 1 - 40s_{49}$.
    \item  $u_9 \ge u_4 + 1 - 40s_{49}$.
\end{itemize}

If $s_{49} = 1$ the above constraints are satisfied if and only if $e_4 \ge_{\pi} e_9$. If $s_{49} = 0$, they are satisfied if and only if $e_9 \ge_{\pi} e_4$.

We do not need to add selected order constraints for every pair of items, because some order constraints are forced by other constraints (e.g., by Proposition~\ref{pro:orderConstraints}). In our MIP, we used selected order constraints for seven
pairs of items.

The goal of the selected allocation constraints is to make sure that solutions of the MIP are supported only on max gap instances. To illustrate how this is done, consider a candidate allocation $(e_1,e_2,e_3), (e_4,e_6,e_8), (e_5,e_7,e_9)$ (the bundles go to agents $R$, $C$ and $U$ in this order).
We need to ensure that in this allocation some agent gets value at most $b-1$. Observe that in this allocation agent $R$ gets $R_1$ and hence a value of at least $b$. Thus the agent getting value at most $b-1$ is either $C$ or $U$, but we do not know which one of them. We introduce a selection variable $s_{c468u579}$ with the integer (binary) constraint $s_{c468u579} \in \{0,1\}$. We also select a sufficiently large constant, larger than the maximum possible value of a bundle of items. Choosing this constant as~100 suffices for our purpose. The MIP then contains the following two linear constraints.

\begin{itemize}
    \item $c_4 + c_6 + c_8 \le b-1 + 100s_{c468u579}$.
    \item $u_5 +u_7 + u_9 \le b-1 + 100 - 100s_{c468u579}$.
\end{itemize}

If $s_{c468u579} = 1$ the above constraints are satisfied if and only if the bundle of $U$ is worth at most $b-1$. If $s_{c468u579} = 0$, they are satisfied if and only of the bundle of $C$ is worth at most $b-1$.

As in the case of selected order constraints, we do not need to add selection allocation constraints for every possible allocation, as some allocation constraints are forced by other constraints (e.g., by Proposition~\ref{pro:allocationConstraints}). In our MIP, we used selected allocation constraints for forty allocations.

The code for the MIPs (one for the parallel diagonals structure, one for the crossing diagonals structure) that verify the correctness of Theorem~\ref{thm:tightGap} can be obtained from the authors upon request. Appendix~\ref{app:implement} provides some further explanations regarding the constraints used in these MIPs. 

\begin{remark}
The example proving Theorem~\ref{thm:largeGap} was also found using the MIP approach described above. We extracted a negative example from a solution of value $b=40$, and then manually modified it so as to make it easier to generate a humanly verifiable proof for its correctness. (There are several negative examples with $b=40$, and the MIP solver does not necessarily produce the one that is easiest for humans to analyse.)
\end{remark}}

\section{Extension to chores}
\label{sec:chores}

{\em Chores} are items of negative value, or equivalently, positive {dis-utility}. In allocation problems involving only chores, the convention is that all items must be allocated. In analogy to Definition~\ref{def:MMS},  $MMS_i$ is the minimum over all $n$-partitions of $M$, of the maximum dis-utility under $v_i$ of a bundle in the $n$-partition. (Note that as dis-utility replaces value, maximum and minimum are interchanged in this definition, compared to Definition~\ref{def:MMS}.) It is known that for agents with additive dis-utility functions over chores, there are allocation instances in which in every allocation some agent gets a bundle of dis-utility higher than her MMS~\cite{ARSW17}, and that there always is an allocation giving every agent a bundle of dis-utility at most $\frac{11}{9}$ times her MMS~\cite{HL19}.

{We now prove Theorem~\ref{thm:chores}, that there is an instance with three agents and nine chores that has an MMS gap of $\frac{1}{43}$.}


\begin{proof}
We present an example with an MMS gap of $\frac{1}{43}$, using notation as in Section~\ref{sec:40}.

Every row in the dis-utility function of $R$ has value~43 and gives $R$ her MMS. Her dis-utility function is:

$\left(
  \begin{array}{ccc}
    6 & 15 & 22 \\
    26 & 10 & 7 \\
    {\bf 12} & {\bf 19} & {\bf 12} \\
  \end{array}
\right)$

Every column in the dis-utility function of $C$ has value~43 and gives $C$ her MMS. Her dis-utility function is:

$\left(
  \begin{array}{ccc}
    6 & 15 & {\bf 23} \\
    26 & 10 & {\bf 8} \\
    11 & 18 & {\bf 12} \\
  \end{array}
\right)$

The bundles that give $U$ her MMS are $p = \{e_2, e_4\}$ (the {\em pair}, in boldface), $d = \{e_3, e_5, e_7\}$ (the {\em diagonal}), and $q = \{e_1, e_6, e_8, e_9\}$ (the {\em quadruple}). The dis-utility function of $U$ is:

$\left(
  \begin{array}{ccc}
    6 & {\bf 16} & 22 \\
    {\bf 27} & 10 & 7 \\
    11 & 18 & 12 \\
  \end{array}
\right)$

In analogy to Section~\ref{sec:40}, to analyse all possible allocations in a systematic way, it is convenient to consider a dis-utility function $M$ in which the dis-utility of each chore as the minimum (rather than maximum, as we are dealing with chores) dis-utility given to the chore by the three agents. Hence $M$ is:

$\left(
  \begin{array}{ccc}
    6 & 15 & 22 \\
    26 & 10 & 7 \\
    11 & 18 & 12 \\
  \end{array}
\right)$

Adaptation of the analysis of Section~\ref{sec:40} shows that in every allocation, some agent gets chores of dis-utility at least~44, whereas the MMS is~43. Further details of the proof are omitted.
\end{proof}

\section{Discussion}

{The open questions below refer to allocations of goods. Questions of a similar nature can be asked for chores, though the quantitative bounds in these questions would be different from those mentioned below.}

Let $\delta_n$ denote the largest value such that for $n$ agents, there is an allocation instance with additive valuations for which no allocation gives every agent more than a $1 - \delta_n$ fraction of her MMS. We have that $\delta_1 = \delta_2 = 0$. As to $\delta_3$, the combination of our Theorem~\ref{thm:largeGap} and the results of~\cite{GM19} imply that $\frac{1}{40} \le \delta_3 \le \frac{1}{9}$. It would be interesting to determine the exact value of $\delta_3$, or at least to narrow the gap between its lower bound and upper bound. Computer assisted techniques, such as those used in the proof of Theorem~\ref{thm:tightGap}, may turn out useful for this purpose.

For general $n \ge 3$, the combination of our Theorem~\ref{thm:polynomialGap} and the results of~\cite{GT20} imply that $\frac{1}{n^4} \le \delta_n \le \frac{1}{4} + \frac{1}{12n}$. We do not know whether $\delta_n$ tends to~0 as $n$ grows. Determining whether this is the case remains as an interesting open question. The known results do not exclude the possibility that $\delta_n$ tends to $\frac{1}{4}$ as $n$ grows, but we would be very surprised if this turns out to be true.

\subsection*{Acknowledgements}

{This research benefited from the use of automated solvers for mixed integer programs. Specifically, we have used solvers of {\em https://online-optimizer.appspot.com/} and  {\em https://www.lindo.com/}.}

{We would like to thank Shai Keidar and Orin Munk for helpful discussions.}

\begin{appendix}

\section{Some further details on the MIP}
\label{app:implement}

{
The notation in this appendix is hopefully self explanatory. Proofs are only sketched.

We note that for the parallel diagonal structure, many order constraints are implied by other constraints. This may explain why a relatively small number of selected order constraints suffice in order to enforce a consistent total order among all items.
The following order constraints are stated here for convenience, with hints as to why they hold. They need not be added explicitly to the MIP, as they are implied by other constraints in the root LP.

\begin{itemize}

\item $e_2 \ge e_5 + 1$ ($P > R_2$).

\item $e_2 \ge e_7 + 1$  ($P > C_1$).

\item $e_3  \ge e_8 + 1$ ($C_3 > Q$).

\item $e_4 \ge e_3 + 1$ ($P > R_1$).

\item $e_7 \ge e_1 + 1$ ($R_3 > Q$).

\item $e_7 \ge e_6 + 1$ ($R_3 > Q$).

\item $e_9 \ge e_5 + 1$ ($R_3 > D$ and $e_3 \ge e_8 + 1$).

\item $e_5 \ge e_1 + 1$ ($v_R(R_2) > v_R(C_1)$ and $e_7 > e_6$; $v_C(C_2) > v_C(R_1)$ and $e_3 > e_8$; $v_U(D) > v_U(C_1)$ and $e_4 > e_3$).

\end{itemize}

We now provide some details (proofs left to the reader) about the root LP for the crossing diagonals structure. Naturally, it has the MMS constraints and the bad bundles constraints that are associated with the CD structure (e.g., the constraint $u_1 + u_5 + u_9 = b$). Theorem~\ref{thm:max-gap} holds with the following change: the two exceptions are $(e_3, e_6)$ and $(e_7,e_8)$, instead of $(e_6,e_9)$ and $(e_8,e_9)$. Proposition~\ref{pro:orderConstraints} and~\ref{pro:allocationConstraints} hold without change. Consequently, we can have a root LP with 60 constraints (same number as in the root LP for the parallel diagonals structure), and then extend it to an MIP. For the crossing diagonals structure, the value of the MIP turns out to be~47, larger than the value of~40 obtained for the parallel diagonals structure.

For the crossing diagonals structure we can strengthen the root LP, as additional useful order constraints can be inferred. The following order constraints hold (hints are given) but need not be added (as they are implied by other constraints).

\begin{itemize}
    \item $e_2 > e_1, e_7$ ($P \ge C_1$).
    \item $e_2 > e_5, e_6$ ($P \ge R_2$).
    \item $e_4 > e_1, e_3$ ($P \ge R_1$).
    \item $e_4 > e_5, e_8$ ($P \ge C_2$).
    \item $e_9 > e_3, e_6$ ($R_3 \ge Q$).
    \item $e_9 > e_7, e_8$ ($C_3 \ge Q$).
\end{itemize}

We now derive useful order constraints. Suppose for the sake of contradiction that $e_9 \ge e_4$. Recall that $e_3 \ge e_7$. Then $v_C(C_1) = v_C(C_3)$ implies that $e_1 \ge e_6$, whereas  $v_R(R_2) > v_R(D)$ implies that $e_6 > e_1$, a contradiction. Hence we have the following useful constraint:

\begin{itemize}
    \item $e_4 \ge e_9 + 1$.
\end{itemize}

This implies additional order constraints (note that most of them can be enhanced by a $+1$ term), some of which are useful:

\begin{itemize}
    \item $e_1 \ge e_6$ ($v_U(D) \ge v_U(R_2)$).
    \item $e_5 \ge e_7$ ($v_R(R_2) \ge v_R(C_1)$).
    \item $e_3 \ge e_5$ ($v_C(C_3) \ge v_C(R_2)$).
    \item $e_8 \ge e_1$ ($v_R(R_3) \ge v_R(C_1)$).
    \item $e_9 \ge e_2$ ($v_U(D) \ge v_U(C_2)$).
\end{itemize}

In general, there is a tradeoff in the effort involved in deriving constraints analytically, compared to having more binary selection variables in the MIP. We believe that we have reached a reasonable point along this tradeoff curve, so that neither the analytic proofs nor the code for the MIP are too complicated.

}

\end{appendix}

\end{document}